\newif\iflncs\lncstrue
\newif\ifmai\maifalse
\newenvironment{proof_of}[1]{\begin{proof}[Proof of #1.]}{\end{proof}}
\newif\ifmai\maifalse
\newcounter{ncomm}%
\newcommand{\BG}{$(F,G)$} 
\newcommand{\BBGG}{well-behaved} 
\newcommand{\revOne}[1]{{\color{black} #1}}
\newcommand{\revTwo}[1]{{\color{black} #1}}
\newcommand{\drifts}{{\cal D}}
\newcommand{\incond}{\rho}
\newcommand{\der}{\delta}
\newcommand{\pr}{\pi}
\newcommand{\Str}{\Sigma}
\newcommand{\had}{\odot}
\newcommand{\E}{\mathcal{E}} 
\newcommand{\B}{\mathcal{B}} 
\newcommand{\Cn}{\mathcal{C}} 
\newcommand{\An}{\mathcal{A}} 
\newcommand{\Lie}{\mathrm{L}} 
\newcommand{\Eqs}{\mathcal{D}}  
\newcommand{\reals}{\mathbb{R}}  
\newcommand{\C}{\mathbb{C}}  
\newcommand{\ode}{{\sc ODE}}
\newcommand{\sde}{{\sc SDE}}
\newcommand{\defii}{:=}   
\newcommand{\convo}{\times} 
\newcommand{\K}{\mathbb{K}} 
\newcommand{\N}{\mathbb{N}} 
\newcommand{\Pol}{\mathcal{P}}
\newcommand*\Neginternal[3]{\mathpalette\Neg@{{#1}{#2}{#3}}}
\newcommand*\Neg@[2]{\Neg@@{#1}#2}
\newcommand*\Neg@@[4]{%
	\mathrel{\ooalign{%
			$\m@th#1#4$\cr
			\hidewidth$\m@th#3{#1}\mkern\muexpr#2*2$\hidewidth\cr
	}}%
}
\newcommand*\negslash[1]{\m@th#1\not\mathrel{\phantom{=}}}
\newcommand*\snegslash[1]{\rotatebox[origin=c]{60}{$\m@th#1-$}}
\newcommand*\ssnegslash[1]{\rotatebox[origin=c]{60}{$\m@th#1{\dabar@}\mkern-7mu{\dabar@}$}}
\newcommand*\sssnegslash[1]{\rotatebox[origin=c]{60}{$\m@th#1\dabar@$}}
\newcommand{\altfrac}[2]{\ifmmode\def\tmp{$}\else\def\tmp{}\fi\mbox{%
		{\raisebox{.24\ht\strutbox}{\tmp#1\tmp}}%
		\kern-2.2pt\scalebox{1.6}[1.5]{/}\kern-1.8pt%
		{\tmp#2\tmp}%
}}
\newcommand{\ccc}{\mathrm{\mathbf{c}}}
\newcommand{\ov}[1]{{\pmb{\mathrm{#1}}}}
\begin{document}
		
\title{An implicit function theorem  for the stream calculus}

\author[M.~Boreale]{Michele Boreale}[a]
\author[L.~Collodi]{Luisa Collodi}[a]
\author[D.~Gorla]{Daniele Gorla}[b]

\address{Università di Firenze, Italy.}	
\email{michele.boreale@unifi.it, luisa.collodi@unifi.it}  

\address{‘‘Sapienza'' Università di Roma, Italy.}	
\email{gorla@di.uniroma1.it}  

\begin{abstract}
	In the context of the stream calculus, we present an Implicit Function Theorem  (IFT) for polynomial systems,  and   discuss its relations with the classical IFT from calculus. In particular, we demonstrate the advantages of the stream IFT from a computational point of view, and provide a few example applications  where its use turns out to be valuable.
\end{abstract}
		
		\maketitle

\section{Introduction}\label{sec:intro}
In theoretical computer science, the last two decades have seen an increasing interest in the concept of \emph{stream} and in the related proof techniques, collectively designated as the \emph{stream calculus} \cite{Rut01,Rut03,Rut05}.
A \emph{stream} $\sigma=(r_0,r_1,...)$ is an infinite sequence of elements (coefficients) $r_i$   drawn from
\revOne{any set; the stream calculus  requires that this set be endowed with some algebraic structure, such as a field.}
Therefore, as a concrete mathematical object,  a stream is just the same   as  a \emph{formal power  series} considered in combinatorics and other fields of mathematics. The use of a different terminology here is motivated by the fact that, with streams, the basic computational device is that of \emph{stream derivative}, as opposed to the ordinary derivative from calculus   considered in formal power series. The stream derivative  $\sigma'$    is obtained  by simply removing the first element $r_0$ from   $\sigma$. \revOne{The ordinary derivative of a formal power series, on the other hand,  does not enjoy such a simple formulation in terms of stream manipulation. The simplicity of stream derivative also leads to   computational advantages, as  discussed further below.
	Algebraically, the stream derivative enjoys a nice relation with the operation of \emph{convolution} $\convo$ (one of the possible notions of product for streams, to be introduced in Section \ref{sec:background}), as expressed by the so-called \emph{fundamental theorem} of the stream calculus:
	$$\sigma=\sigma(0)+X\convo \sigma'\,.$$
	Here, $\sigma(0)$ is taken as an abbreviation of the stream $(\sigma(0),0,0,...)$, while $X =(0,1,0,0,...)$. As can be intuited from this equation, multiplying a stream ($\sigma'$ in this case) by $X$ has the effect of shifting one position to the right the stream's coefficients.}

A powerful and elegant proof technique for streams is  \emph{coinduction} \cite{Sangio}, whose step-by-step flavour  naturally agrees  with the above mentioned features of streams, in particular stream derivative. Moreover,
an important specification and computational device is represented by \emph{stream differential equations} (\sde s, \cite{HKR17}),  the analog of   \emph{ordinary} differential equations (\ode s)  \revOne{for functions and formal power series \cite{flajo,Stanley}}. It is this toolkit of mechanisms and proof techniques that one collectively designates as the  {stream calculus} \cite{Rut01,Rut03,Rut05}.
One point of strength of the stream calculus is that it provides simple, direct and unified reasoning techniques, that can be applied to a variety of systems that involve the treatment of sequences.  A distinguished feature of proofs conducted within the stream calculus is that issues related to convergence (of sequences, functions etc.) basically never enter the picture. As an example,  the stream calculus has been proved valuable in providing a coinductive account of analytic functions and Laplace transform \cite{Escardo}, in solving difference and differential equations \cite{Bor19,Bor20,Rut03,Rut05}, as well as in formalizing several versions of signal flow graphs \cite{BSZ15,BSZ17,Rut03a,Rut05a}.
In Section \ref{sec:background}, we provide a quick overview of the basic definitions and features of the stream calculus.

The main goal and  contribution of the present work is to     add yet another tool to the stream calculus:  an Implicit Function Theorem (IFT) for systems of  stream polynomial  equations. Indeed, while \sde s represent a powerful computational device, depending on the problem at hand, streams may be  more naturally expressed in an algebraic fashion, that is as the (unique) solution of systems of polynomial equations. In analogy with the classical IFT from calculus \cite{Krantz,Rudin,IFT}, our main result provides sufficient syntactic  conditions under which a system of polynomial equations has a unique stream solution. Moreover, the theorem also provides an equivalent system of \sde s, that is useful to actually generate the stream solution. \revOne{It is here that the computational advantage of stream derivatives, as opposed to  ordinary ones,  clearly shows up.}

\revOne{In the classical IFT \cite[Th.9.28]{Rudin},
one considers a system  of  equations in the variables $( x,\ov y)$, say
$\ov F( x,\ov y)=0$; for simplicity, here we assume that $x$ is a scalar, while $\ov y$ can be a vector. 
The IFT  gives sufficient conditions under which,
for any fixed point  $(  x_0,\ov y_0)$ satisfying $\ov F( x_0,\ov y_0)=0$,    a neighbourhood around $x_0$ and a map $f$ on that neighbourhood exist fulfilling $f(x_0)=\ov y_0$ and  $\ov F(x, f(x)) = 0$.
Otherwise said, $\ov F(  x,\ov y)=0$ implicitly defines a function  $f$ (hence the name of the theorem) such that $f(x_0)=\ov y_0$ and $ \ov F(x,f(x))$ is identically 0 in a neighborhood of $x_0$, 
}
The required sufficient condition is that the Jacobian matrix  (the matrix of partial derivatives) of $\ov F$ with respect to $\ov y$ be nonsingular when evaluated at $(  x_0,\ov y_0)$.
The theorem also gives a system of   \ode s whose solution is the function $  f(  x)$. Although the  \ode\  system  will be in general impossible to solve analytically, it can be used to compute a truncated Taylor series of $  f(   x)$  to the desired degree of approximation.   

In Section \ref{sec:general}, in the setting of the stream calculus and of polynomial equations, we obtain a version of the IFT whose form  closely resembles  the classical one (Theorem \ref{th:ift}). The major difference is that the stream version relies, of course, on stream derivatives, and on a corresponding    notion of stream Jacobian. In particular, the system of \ode s that defines the solution is here   replaced by a system of \sde s.
A crucial step towards proving the result is devising a stream version of  the  chain rule from calculus, whereby one can express the derivative of a function $\ov F(x,y_1(x),...,y_n(x))$  with respect to $x$  in terms of the partial derivatives of $\ov F$ with respect to $y_i$ and the ordinary derivative of the $y_i$ with respect to $x$.

In Section \ref{sec:classical}, beyond the formal similarity, we discuss the precise mathematical relation of the stream IFT with the classical IFT (Theorem \ref{th:class}). We show that the two theorems can be applied precisely under the same   assumptions on the classical Jacobian of $\ov F(  x,\ov y)$. Moreover, the sequence of Taylor coefficients of the function defined by the classical IFT coincides with the solution identified by the stream IFT. Therefore, one has two alternative methods to compute the (stream) solution.
Despite this close relationship,  the stream version of the theorem  is conceptually and computationally very different from the classical one; the computational aspects will be further discussed below. In Section \ref{sec:classical}, we also discuss the relation of the stream IFT  with {algebraic series} as considered in enumerative combinatorics \cite{flajo,Stanley}.


As an extended example of application of the stream IFT, in Section \ref{sec:casestudy}  we apply the result to the problem of enumerating   \emph{three-colored trees} \cite[Sect.4, Example 14]{flajo}, a typical class of combinatorial objects  that are most naturally described by  algebraic equations.

In Section \ref{sec:computational} we  discuss the computational aspects of the stream IFT. We first outline an efficient method to calculate the coefficients of the stream solution up to a prescribed order, based on the \sde\ system   provided by the theorem. Then  we   offer an empirical comparison between two methods to compute the stream solution: the above mentioned method based on the stream IFT, and  the method based on the \ode s provided by the classical IFT.
This comparison clearly shows the computational benefits of first method (stream IFT)  over the second one (classical IFT)  in terms of running time.
An important point is that, when applied to polynomials, the syntactic size of stream derivatives is approximately \emph{half} the size of ordinary (classical) derivatives. 

We conclude the paper in Section \ref{sec:concl} with a brief discussion on  possible directions for future research.

\paragraph*{Related work}
The stream calculus in the form considered here has been introduced by Rutten in a series of works, in particular \cite{Rut01} and \cite{Rut03}.
In \cite{Rut01}, streams and operators on streams are introduced via coinductive definitions and behavioural differential equations, later called stream differential equations, involving initial conditions and derivatives of streams. Several applications are also presented to:  difference equations, analytical differential equations, and some problems from discrete mathematics and combinatorics. In \cite{Rut03} streams, automata, languages and formal power series are studied in terms of the notions of coalgebra homomorphisms and bisimulation.

A recent development of the stream calculus that is related to the present work is \cite{BG21}, where the authors introduce a   polynomial format for \sde s\  and an algorithm  to automatically check  polynomial equations, with respect to a  \emph{generic} notion of  {product} for streams  satisfying certain conditions. These results can be applied to {\em convolution} and \emph{shuffle} products, among the others.

In     formal language theory, context-free grammars can be viewed as   instances of polynomial systems: see   \cite{KS}. A   coinductive treatment of this type of systems is found in Winter's work  \cite[Ch.3]{Win14}.   Note that, on one hand,  the polynomial format we consider here is significantly more expressive than context-free grammars, as we can deal with such  equations as $x^2+y^2-1 =0$ (see Example \ref{ex:guarded} in Section \ref{sec:general}) that are outside the context-free format. On the other hand, here  we confine to \emph{univariate} streams, which can be regarded as weighted languages on the alphabet $\{x\}$, whereas in language theory alphabets of any finite size can be considered. How to extend the present results to multivariate streams is a \revOne{challenging direction for future research}.

In enumerative combinatorics \cite{flajo,Stanley}, formal power series defined via polynomial equations are named \emph{algebraic series}.  \cite[Sect.4]{flajo} discusses several aspects of algebraic series, including several methods of reduction, involving the theory of resultants and Groebner bases. We compare our approach to algebraic series in Section \ref{sec:general}, Remark \ref{rem:algebraic}.

\newcommand{\Y}{\mathcal{Y}}
\section{Background}\label{sec:background}
\subsection{Streams}
\revOne{Let $(\K, 0, 1, +, \cdot)$ be a field.}
We let  $\Str\langle \K\rangle\defii \K^\omega$, ranged over by $\sigma,\tau,...$, denote the set of \emph{streams}, \revOne{that are infinite sequences  of elements from $\K$}: $\sigma=(r_0,r_1,r_2,...)$ with $r_i\in \K$. Often $\K$ is understood from the context and we shall simply write $\Str$ rather than $\Str\langle \K\rangle$.
When convenient, we shall explicitly consider a stream $\sigma$ as a function from $\mathbb N$ to $\K$
and, e.g., write $\sigma(i)$ to denote the $i$-th element of $\sigma$.
By slightly overloading the notation, and when   the context is sufficient to disambiguate, the stream $(r,0,0,...)$ ($r\in \K$) will be simply denoted  by  $r$, while the stream $(0,1,0,0,...)$ will be denoted by $X$;  see \cite{Rut03} for motivations behind these notations.
Given two streams $\sigma$ and $\tau$, we define the streams $\sigma+\tau$ (sum) and $\sigma\times\tau$ (convolution product)
by
\begin{align}\label{eq:sumconv}
(\sigma+\tau)(i)&\defii \sigma(i)+\tau(i) &  (\sigma\convo \tau)(i)& \defii \sum_{0\leq j\leq i}\sigma(j)\cdot\tau(i-j)
\end{align}
for each $i\geq 0$, where the $+$ and $\cdot$ on the right-hand sides above denote    sum and product in $\K$, respectively. Sum enjoys the usual commutativity and associativity properties, and has the stream $0=(0,0,...)$ as an identity.

Convolution product is commutative,  associative,   has   $1=(1,0,0,...)$ as an identity, and distributes over $+$;
\revOne{
differently from, e.g., \cite{BG21,HKR17}, here we only consider the convolution product.
}

Multiplication of $\sigma=(r_0,r_1,...)$  by a scalar $r\in \K$, denoted $r\sigma=(r\,r_0,r\, r_1,...)$, is also defined and makes $(\Sigma,+,0)$ a vector space over $\K$. Therefore, $(\Sigma, +,\convo,0,1)$ forms a \revTwo{(associative)} $\K$-algebra. 
We also record the following   facts for future use:
$
X \convo \sigma   = (0,  \sigma(0),\sigma(1),...)
$ and
$
r \,\convo\, \sigma  = (r\,\sigma(0),r\, \sigma(1),...) 
$, where $r\in \K$.
\ In view of the second equation above, $r\,\convo\,\sigma$ coincides with
$r\sigma$.

For each $\sigma$, we let its \emph{derivative}  $\sigma'$ be the stream defined by $\sigma'(i)=\sigma(i+1)$ for  each $i\geq 0$. In other words, $\sigma'$ is obtained from $\sigma$ by removing the first element $\sigma(0)$.
The equality $X \convo \sigma  = (0,  \sigma(0),\sigma(1),...)$   above leads to the so called fundamental theorem of the stream calculus, whereby for each $\sigma\in \Sigma$
\begin{equation}\label{eq:ftsc}
\sigma  =\sigma(0)+X\times \sigma'\,.
\end{equation}
Every stream $\sigma$ such that $\sigma(0)\neq 0$ has a unique   inverse with respect to convolution, denoted $\sigma^{-1}$, that satisfies the equations:
\begin{align}\label{eq:inv}
(\sigma^{-1})'&=-\sigma(0)^{-1}\cdot (\sigma'\convo \sigma^{-1}) & (\sigma^{-1})(0)&=\sigma(0)^{-1}\,.
\end{align}

\subsection{Polynomial stream differential equations}
Let us fix a finite, non empty set of symbols or \emph{variables}  $\Y = \{y_1,\ldots,y_n\}$ and a \revTwo{distinguished} variable $x\notin \Y$. Notationally, when fixed an order on such variables, we use the notation $\ov y:=(y_1,...,y_n)$.
We fix a generic field $\K$ of characteristic 0;   $\K=\reals$ and $\K=\C$ will be typical choices.
We let $\Pol\defii\K[x,y_1,...,y_n]$, ranged over by $p,q,...$, be the set of polynomials with coefficients in $\K$ and indeterminates in $\{x\}\cup \Y$. 
As usual, we shall denote polynomials as formal finite sums of distinct monomials  with coefficients in $\K$: $p=\sum_{i\in I}r_i m_i$, for $r_i\in \K$ and $m_i$ monomials over $\{x\}\cup\Y$.
For the sake of uniform notation, we shall sometimes let  $y_0$ denote $x$, so we can write a generic monomial in $\Pol$ as $y_0^{k_0}\cdots y_n^{k_n}$, for $k_i \in \N$ for every $i$.
By slight abuse of notation, we shall write the zero polynomial and the empty monomial as $0$ and $1$, respectively.

Over $\Pol$, one   defines the usual operations of sum   $p+q$ and product $p\cdot q$, with 0 and 1 as identities,  and enjoying commutativity, associativity and distributivity, which make $\Pol$ a ring. Multiplication of $p\in \Pol$  by a scalar $r\in \K$, denoted $rp$, is also defined and makes $(\Pol,+,0)$ a vector space over $\K$. Therefore, \revOne{$(\Pol, +,\times, 0, 1)$ as well forms a  free $\K$-{algebra}   with generators $(x, y_1, ..., y_n)$.}
For each $n$-tuple of streams $\ov\sigma=(\sigma_1,...,\sigma_n)$, there is a unique $\K$-algebra homomorphism $\phi_{\ov\sigma}:\Pol\longrightarrow \Str$ such that $\phi_{\ov\sigma}(x)= X$ and $\phi_{\ov\sigma}(y_i)=\sigma_i$ for $i=1,...,n$. For any $p\in \Pol$, we let $p(X,\ov\sigma):=\phi_{\ov\sigma}(p)$, that is the result of substituting the variables $x$ and $\ov y$ in $p$ with the streams $X$ and $\ov\sigma$, respectively.

\begin{defi}[\sde\ \cite{Rut03}]\label{def:sde}
Given a tuple of polynomials $(p_1,...,p_n)\in \Pol^n$ and $\ov r_0=(r_1,...,r_n)\in \K^n$, the corresponding system of 
(polynomial) \emph{stream differential equations (\sde s)} $\Eqs$ and \emph{initial conditions}   are written as follows
\begin{align}\label{eq:ivp}
	\Eqs&=\{y'_1 = p_1,...,y'_n = p_n\} & &\rho=\{y_1(0)=r_1,...,y_n(0)=r_n\}\,.
\end{align}
The pair $(\Eqs,\incond)$ is also said to form  a (polynomial) \sde\
{\em initial value problem}  for the variables $\ov y$. A \emph{solution} of \eqref{eq:ivp} is  a  tuple of streams $\ov\sigma=(\sigma_1,...,\sigma_n)\in\Str^n$ such that $\sigma'_i=p_i(X,\ov\sigma)$ and $\sigma_i(0)=r_i$, for $i=1,...,n$.
\end{defi}

\revOne{
A natural generalization of the above definition are systems of  {\emph{rational}} \sde s, where the right-hand side of each equation is   a  fraction of polynomials. Systems of rational \sde s have indeed the same expressive power as polynomial ones: a version of this  (well-known) result will be explicitly formulated  in Section \ref{sec:general} (see Lemma \ref{lemma:rational}).
}
%
%
For a proof of the following theorem (in a more general context), see e.g.   \cite{BG21,HKR17}.

\begin{thm}[existence and uniqueness of solutions]\label{thm:main}
Every polynomial \sde\ initial value problem of the form  \eqref{eq:ivp} has a unique solution.
\end{thm}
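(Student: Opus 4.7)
The plan is to prove existence and uniqueness simultaneously, by a coefficient-by-coefficient induction that rewrites each \sde\ as a recursive equation via the fundamental theorem \eqref{eq:ftsc}.

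The crucial preliminary step is a \emph{causality} property of polynomial evaluation on streams: for every $p \in \Pol$, every tuple $\ov\sigma = (\sigma_1,\ldots,\sigma_n)\in \Str^n$, and every $k \geq 0$, the scalar $p(X,\ov\sigma)(k)$ is a polynomial expression in the finitely many coefficients $\{\sigma_j(i) : 1 \leq j \leq n,\ 0 \leq i \leq k\}$. This is shown by structural induction on $p$: the claim is immediate for the generators $x$ and $y_j$; it is preserved under scalar multiplication; it is preserved under sum (which acts componentwise in $k$); and it is preserved under convolution, since $(\alpha \convo \beta)(k) = \sum_{j=0}^{k} \alpha(j)\cdot\beta(k-j)$ involves only indices up to $k$.

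Using \eqref{eq:ftsc}, the system $\sigma_i' = p_i(X,\ov\sigma)$ with $\sigma_i(0)=r_i$ is equivalent to
\[
\sigma_i \;=\; r_i + X \convo p_i(X,\ov\sigma) \qquad (i=1,\ldots,n),
\]
and taking $k$-th coefficients yields $\sigma_i(0) = r_i$ and, for $k \geq 1$, $\sigma_i(k) = p_i(X,\ov\sigma)(k-1)$. By the causality property, the right-hand side depends only on coefficients $\sigma_j(i)$ with $i \leq k-1$. Hence these relations determine the $\sigma_i(k)$ unambiguously by induction on $k$: existence follows by taking them as a definition, and uniqueness follows because every solution must satisfy them.

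No serious obstacle is expected: the argument is conceptually routine and the only point that needs some care is the structural-induction verification of causality. A slightly more abstract reformulation would be to view the map $\ov\sigma \mapsto \bigl(r_i + X \convo p_i(X,\ov\sigma)\bigr)_{i}$ as a strict contraction on the ultrametric space $\Str^n$ equipped with the prefix metric, the extra $X$ factor being precisely what supplies contractivity, so that existence and uniqueness also follow from Banach's fixed-point theorem.
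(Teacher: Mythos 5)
The paper does not actually supply its own proof of Theorem~\ref{thm:main}: it defers to the references \cite{BG21,HKR17}. So there is no in-paper argument to compare yours against line by line. That said, your proof is correct and is the standard one, and it matches what the paper itself records elsewhere: Remark~\ref{rem:comp} already states the recurrence $\sigma_i(k+1) = p_i(X,\ov\sigma_{:k})(k)$ and attributes it to causality, which is precisely the coefficient-by-coefficient induction you carry out. The one step that genuinely needs care is the causality lemma (that $p(X,\ov\sigma)(k)$ depends only on coefficients of index $\leq k$), and you correctly isolate it and verify it by structural induction on $p$, with the convolution case being the nontrivial one. The rewriting via the fundamental theorem \eqref{eq:ftsc} into $\sigma_i = r_i + X \convo p_i(X,\ov\sigma)$ cleanly yields both existence (define $\sigma_i(k)$ by the recurrence) and uniqueness (the recurrence is forced), and your closing remark that this can be packaged as a Banach fixed-point argument on the prefix ultrametric, with the leading $X$ providing the $1/2$-contraction, is also correct and is the phrasing used in the more general coalgebraic treatments the paper cites. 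In short: no gap; your proof is sound and is essentially the canonical argument the paper points to.
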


\begin{rem}[stream coefficients computation]\label{rem:comp}
We record for future use that a \sde\ initial value problem $(\Eqs,\rho)$  like \eqref{eq:ivp}  yields a  recurrence relation, hence an algorithm, to compute the coefficients of the solution streams $\sigma_i$. Indeed, denote by $\sigma_{:k}$ the stream that coincides with $\sigma$ when restricted to $\{0,...,k\}$ and is 0 elsewhere. This notation is extended to a tuple $\ov\sigma$ componentwise. Then we have, for each $i=1,...,n$ and   $k\geq 0$:
\begin{align}
	\sigma_i(0)&=y_i(0)\\ \label{eq:recurrence}
	\sigma_i(k+1)&= \sigma'_i(k)\,=\,p_i(X,\ov\sigma)(k)\,=\,p_i(X,\ov\sigma_{:k})(k)
\end{align}
where the last step follows from the fact that the $k$-th coefficient of $p_i(X,\ov\sigma)$ only depends on the first $k$ coefficients of $\ov\sigma$ (see \eqref{eq:sumconv}).
\revOne{
	In the literature, this is referred to as {\em causality} (see \cite{HKR17,Klin11,PR17}, just to cite a few).
}

As an example, consider (here we let $y=y_1$):
$$
y'=y^2 \qquad \qquad y(0)=1
$$
for which we get the recurrence: $\sigma(0)=1$ and $\sigma(k+1)=\sigma^2(k)=\sum_{j=0}^k\sigma(j)\cdot\sigma(k-j)$.  From the computational point of view this is far from  optimal.  \revOne{Indeed, in the case of  a single polynomial equation ($n=1$) like this one, a linear (in $y$) recurrence relation for generating the Taylor coefficients of the solution can always be efficiently built; see \cite{flajo,Stanley}}.  In the case of $n>1$ equations, the situation is more complicated. We defer to Section \ref{sec:computational} further  considerations on the computation of stream coefficients, including details on an effective implementation  of \eqref{eq:recurrence}.
\end{rem}

\ifmai
\begin{proposition}
\label{prop:uniq}
Let $\pr$ be a \BBGG\ \BG-product and $\nu$ be a $\K$-algebra
homomorphism from $(\Pol, +, \cdot\,,0, 1)$ to
$(\Sigma,  + , \pr,  0, 1_\pr)$ that respects the initial value problem $(\drifts,\rho)$.
Then, $\nu$ is a coalgebra morphism from $(\Pol, \der_\pr, o_\incond)$ to
$(\Sigma, (\cdot)', o)$.
\end{proposition}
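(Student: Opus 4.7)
The plan is to establish the two coalgebra morphism conditions --- agreement on the output map and commutation with the derivative --- by structural induction on polynomials $p\in\Pol$. For the output condition, I want to show that $\nu(p)(0)=o_\incond(p)$ for every $p$. Since $\nu$ is a $\K$-algebra homomorphism and $o_\incond$ (evaluation of a polynomial at $\ov r_0$ via $x\mapsto 0$) is likewise a $\K$-algebra map into $\K\cong(\Sigma$ at coordinate $0)$, both sides are determined by their action on generators. On constants $r\in\K$, $\nu(r)(0)=r=o_\incond(r)$. On the variable $x$, $\nu(x)(0)=X(0)=0=o_\incond(x)$. On each $y_i$, $\nu(y_i)(0)=r_i=o_\incond(y_i)$ by the assumption that $\nu$ respects the initial conditions $\incond$. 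Multiplicativity and additivity then propagate the equality to all polynomials.

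The substantive part is the derivative condition $(\nu(p))'=\nu(\der_\pr(p))$. I would again induct on the structure of $p$. The base cases use the respect assumption directly: on constants, $(\nu(r))'=0=\nu(\der_\pr(r))$; on $x$, $(\nu(x))'=X'=1=\nu(\der_\pr(x))$; on each $y_i$, the hypothesis that $\nu$ respects $\drifts$ gives $(\nu(y_i))'=\nu(p_i)=\nu(\der_\pr(y_i))$. For sums, the step is immediate from linearity of the stream derivative and from the fact that $\der_\pr$ is defined to act linearly on $\Pol$, combined with additivity of $\nu$.

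The one nontrivial step is the product case, and this is where the \BBGG\ hypothesis on $\pr$ does the real work. By definition, $\der_\pr$ is built on $\Pol$ so as to mirror, symbol by symbol, the Leibniz-style rule that the stream derivative satisfies with respect to $\pr$ --- this compatibility is precisely what \BBGG ness of the \BG-product encodes. Hence given $p=q\cdot r$, one expands $\der_\pr(q\cdot r)$ using that polynomial-level rule, applies $\nu$ (which is a $\K$-algebra homomorphism into $(\Sigma,+,\pr)$), and compares with $(\nu(q)\pr\nu(r))'$ expanded via the corresponding stream-level rule for $\pr$. The two expressions match term-by-term once the inductive hypotheses $(\nu(q))'=\nu(\der_\pr(q))$ and $(\nu(r))'=\nu(\der_\pr(r))$, together with $\nu(q)(0)=o_\incond(q)$ and $\nu(r)(0)=o_\incond(r)$ from the first half, are substituted in.

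The main obstacle is thus purely bookkeeping: to state, in the general \BG-product setting, the precise form of the product rule that makes the two expansions match, and to verify that \BBGG ness of $\pr$ is exactly the condition that validates this matching. Once this lemma-level compatibility is pinned down, the induction closes without further difficulty, and the two conditions together give that $\nu$ is a coalgebra morphism from $(\Pol,\der_\pr,o_\incond)$ to $(\Sigma,(\cdot)',o)$.
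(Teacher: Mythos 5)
The proposition you are trying to prove does not actually appear in the compiled paper, and the paper contains no proof of it: it sits inside an \verb|\ifmai|\ldots\verb|\fi| block with \verb|\maifalse| set, and even the raw \LaTeX\ source gives the statement without any accompanying proof. The notions \BBGG\ \BG-product, $\der_\pr$, $o_\incond$, and ``respects the initial value problem'' are never defined in this paper; they come from the cited work \cite{BG21}, and this proposition looks like an undeleted leftover from that paper. So there is nothing in the present paper to compare your argument against.

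Taken on its own terms, your outline has the right shape: a coalgebra morphism must commute with both the output map and the transition (derivative) map, and proving each by structural induction --- using the $\K$-algebra homomorphism property together with the ``respects $(\drifts,\rho)$'' hypothesis for the base cases on $x$ and $y_i$, and well-behavedness of $\pr$ for the product case --- is exactly the strategy one would expect. Note also that the output condition is not really a separate induction; since both $p\mapsto\nu(p)(0)$ and $o_\incond$ are $\K$-algebra homomorphisms $\Pol\to\K$ that agree on the generators $x,y_1,\dots,y_n$, they agree everywhere, with no case analysis needed. The genuine gap is the one you flag yourself: the entire content of the proposition lives in the product case, and you hand it off as ``purely bookkeeping'' without stating what \BBGG ness says about $(\sigma\pr\tau)'$, what $\der_\pr(q\cdot r)$ is defined to be, or verifying that $\nu$ carries one to the other after the inductive substitutions. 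Calling this bookkeeping understates it: \BBGG ness involves a pair of polynomials $(F,G)$ governing $(\sigma\pr\tau)'$ and $1_\pr$, and checking that $\nu$ intertwines the polynomial-level $\der_\pr$ with the stream-level derivative under $\pr$ is precisely where the hypotheses earn their keep. As written, your argument is a correct plan, not a proof; you would need to unfold the definitions from \cite{BG21} and carry out that one step explicitly for the induction to close.
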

\fi

\newcommand{\Q}{\mathbb{Q}}

\newcommand\nablah{%
\mathrel{\ooalign{$\nabla$\cr\hfil\scalebox{1}[1]{\rotatebox[origin=c]{-70}{$^\setminus$}\!}\hfil\cr}}%
}
\renewcommand\dh{\eth
}

\section{An implicit function theorem for the stream calculus}
\label{sec:general}

Let $\E\subseteq \Pol$  be a finite, nonempty  set of polynomials \revOne{that we call {\em polynomial system}}. A  \emph{stream   solution} of $\E$   is a tuple of streams $\ov\sigma=(\sigma_1,...,\sigma_n)$ such that $p(X,\ov\sigma)=0$ for each $p\in\E$.
We want to show that, under certain syntactic conditions,  $\E$ has a   unique stream solution, which can be also defined via a polynomial \sde\ initial value problem $(\Eqs,\rho)$. Instrumental  to establish this result is a close stream analog of the well known Implicit Function Theorem (IFT) from calculus. 

Let us  introduce some extra notation on polynomials and streams. 
Beside the variables $x$ and   $\ov y =(y_1,...,y_n)$, we shall consider  a   set of new,  distinct  \emph{initial value indeterminates} $\ov y_0=(y_{01},...,y_{0n})$  and \emph{{primed} indeterminates}  $\ov{y'}=(y'_1,...,y'_n)$. \revOne{As usual, we let $y_0=x$; moreover, by slightly abusing notation, we will let $y_{00}$ denote $0$ (the scalar zero).}
\revTwo{
We will assume a fixed total order on all variables
$(x=)y_0<y_1<\cdots<y_n$ and, for any monomial $m\neq 1$, on the variables in $\ov y$ define $\min(m):=\min\{y\,:\,y \text{ occurs in }m\}$, where the $\min$ is taken according to the fixed total order on variables. In the definition below, we order the individual variables in a monomial according to $<$ before proceeding to differentiation. It will turn out that the chosen total order is semantically immaterial,  see Remark \ref{rem:totalorder} further below. The \emph{total degree}
of a monomial $m$ is just its size, that is the number of occurrences of variables in $m$.   Recall that $\K[x,\ov y_{0},\ov y, \ov y']$ denotes the set of polynomials having with coefficients in $\K$ and indeterminates in $x,\ov y_{0},\ov y, \ov y'$.
}

\begin{defi}[syntactic stream derivative]\label{def:stream_der}
\revTwo{The \emph{syntactic stream derivative} operator $(\cdot)':\Pol\rightarrow \K[x,\ov y_{0},\ov y, \ov y']$ is defined as follows. First,  we define $(\cdot)'$ on monomials by induction on the total degree as follows:
	$$
	\begin{array}{cccl}
		(1)':=0  & (x)':=1 & (y_i)':=y'_i & \text{ ($1\leq i\leq n$)  }\\[5pt]
		\multicolumn{3}{l}{ (y_i\cdot m)' := y'_i\cdot m+y_{0i}\cdot (m)'} &  \multicolumn{1}{l}{\text{ ($0\leq i\leq n$,  $y_i=\min(y_i\cdot m)$ and $m\neq 1$).}}
	\end{array}
	$$}
The operator  $(\cdot)'$  is then  extended to polynomials in $\Pol$ by linearity.
\end{defi}

As an example, $(xy^2_1+y_1y_2)'=y^2_1+y'_1y_2+y_{01}y'_2$.
Note that $p'$ lives in a   polynomial ring $\K[x,\ov y_{0},\ov y, \ov y']$ that includes $\Pol$.   We shall write $p'$ as $p'(x,\ov y_0,\ov y,\ov y')$ when   wanting to make     the  indeterminates that may    occur   in $p'$ explicit. With this notation, it is easy to check that  $(\cdot)'$ commutes with substitution, as stated in the following lemma.

\begin{lem}\label{PIPPO}
For every $p(x,\ov y)$ and $\ov \sigma$, we have that $(p(X,\ov \sigma))'=p'(X,\ov\sigma(0),\ov\sigma,\ov\sigma')$.
\end{lem}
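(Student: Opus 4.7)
The plan is to prove the lemma by structural induction on the total degree of the monomial $m$, and then to extend the result by linearity to arbitrary polynomials. The key ingredient I will invoke is the \emph{product rule} for the stream derivative,
$$(\sigma\times\tau)'\;=\;\sigma'\times\tau+\sigma(0)\cdot\tau'\,,$$
which follows directly from \eqref{eq:sumconv} by re-indexing the convolution sum (or equivalently from the fundamental theorem \eqref{eq:ftsc}). This rule is precisely mirrored, at the syntactic level, by the clause $(y_i\cdot m)':=y_i'\cdot m+y_{0i}\cdot m'$ of Definition \ref{def:stream_der}, once one adopts the uniform convention $\sigma_0:=X$ (consistent with $y_0=x$), so that $\sigma_0(0)=0=y_{00}$ and $\sigma_0'=1=(x)'$.

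For the base cases I would check $m=1$, $m=x$, and $m=y_i$ (with $i\geq 1$) by direct inspection: both sides evaluate to $0$, $1$ and $\sigma_i'$ respectively. For the inductive step on $m=y_i\cdot m_0$ with $y_i=\min(m)$ and $m_0\neq 1$, Definition \ref{def:stream_der} gives the syntactic derivative $y_i'\cdot m_0+y_{0i}\cdot m_0'$; evaluating at $(X,\ov\sigma(0),\ov\sigma,\ov\sigma')$ yields $\sigma_i'\cdot m_0(X,\ov\sigma)+\sigma_i(0)\cdot m_0'(X,\ov\sigma(0),\ov\sigma,\ov\sigma')$, and the inductive hypothesis on $m_0$ rewrites the right-hand summand as $\sigma_i(0)\cdot (m_0(X,\ov\sigma))'$. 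On the other side, the $\K$-algebra homomorphism property of $\phi_{\ov\sigma}$ gives $(y_i\cdot m_0)(X,\ov\sigma)=\sigma_i\times m_0(X,\ov\sigma)$, and by the product rule its stream derivative is exactly $\sigma_i'\cdot m_0(X,\ov\sigma)+\sigma_i(0)\cdot (m_0(X,\ov\sigma))'$. The two expressions agree, closing the induction.

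For a general polynomial $p=\sum_j r_j m_j$, both the syntactic derivative operator and the composite ``evaluate-then-take-stream-derivative'' are $\K$-linear (the former by the extension clause in Definition \ref{def:stream_der}, the latter because convolution, scalar multiplication and the shift are $\K$-linear); hence the monomial equality lifts to $p$ componentwise.

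I do not anticipate any genuinely hard step; the only delicate bookkeeping concerns the uniform handling of $y_0=x$ on a par with the proper variables $y_1,\dots,y_n$, which is precisely what the conventions $y_{00}=0$ and (implicitly) $y_0'=1$ are designed to enable. In particular, the semantic invariance of the syntactic derivative under the chosen total order on variables, foreshadowed in Remark \ref{rem:totalorder}, is not needed here: the recursive clause picks out a unique decomposition, and the symmetry between $\sigma$ and $\tau$ in the product rule ensures that any alternative decomposition would yield the same stream.
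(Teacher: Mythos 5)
Your proof is correct and takes essentially the same route as the paper's: the paper's proof is the one-liner ``for $p$ a monomial, by induction on total degree, straightforwardly from Definition~\ref{def:stream_der}; the general case follows by linearity,'' and you have simply supplied the details that sentence elides, with the stream product rule $(\sigma\times\tau)'=\sigma'\times\tau+\sigma(0)\cdot\tau'$ as the workhorse and the convention $\sigma_0:=X$ to handle $y_0=x$ uniformly. The only small blemish is the closing aside about ``symmetry in the product rule'' ensuring order-independence --- the rule as written is manifestly asymmetric (the symmetry only emerges, as Remark~\ref{rem:totalorder} notes, after using $\sigma(0)=\sigma-X\times\sigma'$) --- but since you correctly observe that this point is not needed for the lemma, it does not affect the argument.
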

\begin{proof}
For $p$ a  monomial, the proof is by induction on its total degree,  and straightforwardly follows from Definition~\ref{def:stream_der}. The general case when  $p$ is a linear combination of monomials  follows then by linearity of the definition of $(\cdot)'$.
\end{proof}

\revOne{
\begin{rem}\label{rem:totalorder} While the definition of  {syntactic} stream derivative \emph{does} depend on the chosen total order of indeterminates $(y_0,...,y_n)$, Lemma \ref{PIPPO} confirms that this order becomes immaterial when the indeterminates are substituted with streams. In particular, if $(\cdot)'$ and $(\cdot)^\dagger$ are two syntactic stream derivative operators, corresponding to two different total orders, Lemma \ref{PIPPO} implies that $p'(X,\ov\sigma(0),\ov\sigma,\ov\sigma')= p^\dagger(X,\ov\sigma(0),\ov\sigma,\ov\sigma') =(p(X,\ov \sigma))'$, where the last occurrence of $(\cdot)'$ denotes stream derivative.
	
	Ultimately, this coincidence stems from the fact that the asymmetry in the definition of stream derivative of the convolution product, $(\sigma\times \tau)'=\sigma'\times \tau+\sigma(0)\cdot\tau'$, is only apparent. Indeed, taking into account the equality $\sigma(0)=\sigma-X\times\sigma'$, one can obtain the symmetric rule  $(\sigma\times \tau)'=\sigma'\times \tau+\sigma\times\tau'-X\times \sigma'\times \tau'$. Note, however, that the equality $\sigma(0)=\sigma-X\times\sigma'$ cannot be expressed at the syntactic (polynomial) level.
\end{rem}
}

The next lemma is about rational \sde s, and how to convert them into polynomial \sde s.
\revOne{
This result has already appeared in the literature in various forms, see e.g. \cite{BBHR14,Mil10}. Here we keep its formulation as elementary as possible, and tailor it to our purposes.
}
Its proof is a routine application of equation \eqref{eq:inv}.

\begin{lem}[from rational to polynomial \sde s]\label{lemma:rational}
Let $f_i(x,\ov y_0,\ov y)$ for $i=1,...,n$ and $g(x,\ov y_0,\ov y)$  be polynomials, and $\ov r_0\in \K^n$ be such that $g(0,\ov r_0,\ov r_0)\neq 0$. Let $\ov\sigma=(\sigma_1,....,\sigma_n)$ be any tuple of streams satisfying the following system of \revOne{(rational) \sde s\ } and initial conditions: 
\begin{align}\label{eq:rat}
	\revOne{ \sigma'_i}&= \revOne{f_i(X,\ov r_0,\ov \sigma)\cdot g(X,\ov r_0,\ov  \sigma)^{-1} } &&  \revOne{\sigma_i(0)}=r_{0i} && (i=1,...,n)\,.
\end{align}
Then, for a new variable $w$,  there is a polynomial  $h(x,\ov y_0,\ov y,  w)$,   not depending  on $\ov\sigma$,   such that  $(\ov\sigma, \tau)$, with $\tau:= g(x,\ov r_0,\ov \sigma)^{-1}$, is the unique solution of   the following initial value problem of $n+1$  \emph{polynomial} \sde s and initial conditions: 
\begin{align}\label{eq:polratOne}
	\revOne{ \sigma'_i}&= \revOne{f_i(X,\ov r_0,\ov \sigma)\cdot \tau  } & & \revOne{ \sigma_i(0)}=r_{0i} & (i=1,...,n)\\
	\revOne{\tau'}&= \revOne{-g(0,\ov r_0,\ov r_0)^{-1}\cdot  h(X,\ov r_0,\ov \sigma,  \tau) \cdot \tau} && \revOne{ \tau(0)}=g(0,\ov r_0,\ov r_0)^{-1}  \,.\label{eq:polratTwo}
\end{align}
In particular, the polynomial $h(x,\ov y_0,\ov y,  w)$ is obtained from $g' = g'(x,\ov y_0,\ov y,\ov y')$ by replacing each \revOne{variable} $y'_i$ with \revOne{the polynomial} $ f_i(x,\ov y_0,\ov y)\cdot w$, for $i=1,...,n$. Conversely, for  any $(\ov\sigma, \tau)$ satisfying \eqref{eq:polratOne} and \eqref{eq:polratTwo}, we have that $\ov\sigma$ also satisfies \eqref{eq:rat}.
\end{lem}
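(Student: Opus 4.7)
The plan is to apply \eqref{eq:inv} for the convolution inverse together with the chain rule expressed by Lemma \ref{PIPPO} to rewrite the rational equation for $\tau := g(X,\ov r_0,\ov\sigma)^{-1}$ as a polynomial one. A mild technical point is that Lemma \ref{PIPPO} is stated for polynomials in $\Pol = \K[x,\ov y]$, whereas $g(x,\ov y_0,\ov y)$ also carries the initial-value indeterminates $\ov y_0$; however, since $\ov y_0$ is always specialized to the fixed scalars $\ov r_0$, treating $\ov y_0$ as scalar parameters (whose syntactic stream derivative vanishes) lets Lemma \ref{PIPPO} apply verbatim to $g$, with the symbolic derivative $g'(x,\ov y_0,\ov y,\ov y')$ behaving as expected.

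For the first direction, assume $\ov\sigma$ satisfies \eqref{eq:rat}, and set $\psi := g(X,\ov r_0,\ov\sigma)$ and $\tau := \psi^{-1}$. The equations for the $\sigma_i$ in \eqref{eq:polratOne} are just rewritings of \eqref{eq:rat}. For $\tau$, the initial value $\tau(0)=g(0,\ov r_0,\ov r_0)^{-1}$ follows from \eqref{eq:inv}, while the same formula gives $\tau' = -g(0,\ov r_0,\ov r_0)^{-1}\cdot\psi'\times\tau$. By Lemma \ref{PIPPO}, $\psi'=g'(X,\ov r_0,\ov\sigma,\ov\sigma')$; substituting each $\sigma'_i=f_i(X,\ov r_0,\ov\sigma)\cdot\tau$ from \eqref{eq:rat} produces exactly $h(X,\ov r_0,\ov\sigma,\tau)$, by the definition of $h$ in the statement. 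This yields \eqref{eq:polratTwo}; uniqueness of $(\ov\sigma,\tau)$ is then immediate from Theorem \ref{thm:main}, since \eqref{eq:polratOne}--\eqref{eq:polratTwo} form a polynomial \sde\ initial value problem in $n+1$ variables.

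For the converse, suppose $(\ov\sigma,\tau)$ satisfies \eqref{eq:polratOne}--\eqref{eq:polratTwo}. It suffices to show $\tau=g(X,\ov r_0,\ov\sigma)^{-1}$, since then \eqref{eq:polratOne} reduces to \eqref{eq:rat}. Let again $\psi := g(X,\ov r_0,\ov\sigma)$ and $\chi := \psi\times\tau$. Then $\chi(0)=g(0,\ov r_0,\ov r_0)\cdot g(0,\ov r_0,\ov r_0)^{-1}=1$; using the product rule $(\alpha\times\beta)'=\alpha'\times\beta+\alpha(0)\cdot\beta'$, Lemma \ref{PIPPO}, and the hypothesized SDEs, a short computation gives $\chi'=\psi'\times\tau+\psi(0)\cdot\tau'=0$, the two summands cancelling thanks to \eqref{eq:polratTwo} and the identity $\psi' = h(X,\ov r_0,\ov\sigma,\tau)$ established exactly as in the forward direction. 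The fundamental theorem \eqref{eq:ftsc} then forces $\chi=1$, so $\tau=\psi^{-1}$ as required. The main (minor) obstacle is thus the bookkeeping in the chain-rule substitution that produces $h$; the rest is a direct application of \eqref{eq:inv}, \eqref{eq:ftsc} and Theorem \ref{thm:main}.
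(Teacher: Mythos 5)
Your argument is correct and matches what the paper indicates: the paper gives no explicit proof of this lemma, stating only that it ``is a routine application of equation \eqref{eq:inv},'' and your writeup fills in exactly that routine — \eqref{eq:inv} for $\tau(0)$ and $\tau'$, Lemma \ref{PIPPO} to compute $\psi'$, substitution of $\sigma'_i = f_i\cdot\tau$ to produce $h$, and Theorem \ref{thm:main} plus \eqref{eq:ftsc} for uniqueness and the converse. Your handling of the $\ov y_0$ indeterminates (treated as scalar constants with vanishing syntactic derivative) is the right way to make Lemma \ref{PIPPO} apply to $g(x,\ov r_0,\ov y)$, and the converse argument via $\chi:=\psi\times\tau$ and the product rule cleanly closes the loop.
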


An important technical ingredient in the proof of the IFT for streams is an operator of \emph{stream partial derivative} $\frac \dh {\dh y_i}$ on polynomials: this will allow us to formulate a stream analog  of the chain rule from calculus\footnote{The chain rule from calculus is: $\frac{\mathrm{d}}{\mathrm{d} x}f(y_1(x),...,y_n(x))= \sum_{i=1}^n   \frac{\partial  }{\partial y_i} f(y_1(x),...,y_n(x))\cdot \frac{\mathrm{d}}{\mathrm{d} x}y_i(x)= \nabla_{\ov y}\ f(y_1(x),...,y_n(x))\cdot (\frac{\mathrm{d}}{\mathrm{d} x}y_1(x),...,\frac{\mathrm{d}}{\mathrm{d} x}y_1(x))^T$.}. \revOne{
For our purposes, a chain rule for  polynomials suffices; for a more general scenario, see \cite[Eq.25]{Rut05}, where composition of streams is introduced (and can be used for covering the case of arbitrary functions).
The following result is instrumental to formally introduce stream partial derivatives and the chain rule for streams.}

\begin{lem}
\label{lemma:chainrule}
For every $p\in \Pol$, any $y'_i\in \ov y'$ can only occur linearly in $p'$, i.e., there is a unique $(n+1)$-tuple  $(q_0,q_1,...,q_n)$ of polynomials in $\K[x,\ov y_0,\ov y]$ such that $p'=q_0+\sum_{i=1}^n q_i\cdot y'_i$.
\end{lem}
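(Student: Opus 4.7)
\medskip

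\noindent\textbf{Proof plan.} Since $(\cdot)'$ is defined on polynomials by linear extension from its definition on monomials, and since the class of polynomials in $\K[x,\ov y_0,\ov y, \ov y']$ that are affine (degree at most one) in each $y'_i$ is closed under $\K$-linear combinations, it suffices to prove the statement for an arbitrary monomial $m\in \Pol$. So the plan is to prove by induction on the total degree of $m$ that $m'$ admits a (necessarily unique) representation of the form $q_0+\sum_{i=1}^n q_i\cdot y'_i$ with $q_0,\dots,q_n\in \K[x,\ov y_0,\ov y]$.

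\medskip

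For the base cases (total degree $0$ or $1$), Definition \ref{def:stream_der} gives $(1)'=0$, $(x)'=1$ and $(y_i)'=y'_i$ for $1\leq i\leq n$, each of which is trivially of the required affine shape. For the inductive step, write $m=y_j\cdot \tilde m$ with $\tilde m\neq 1$ and $y_j=\min(m)$ according to the fixed total order. By the recursive clause,
\[
m' \;=\; y'_j\cdot \tilde m \;+\; y_{0j}\cdot \tilde m'\,.
\]
By the induction hypothesis, $\tilde m'=\tilde q_0+\sum_{i=1}^n \tilde q_i\cdot y'_i$ for some $\tilde q_0,\dots,\tilde q_n\in \K[x,\ov y_0,\ov y]$. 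Substituting and collecting the coefficients of each $y'_i$ gives
\[
m' \;=\; y_{0j}\tilde q_0 \;+\; \bigl(\tilde m+y_{0j}\tilde q_j\bigr)\cdot y'_j \;+\; \sum_{i\neq j} y_{0j}\tilde q_i\cdot y'_i\,,
\]
which is again affine in each $y'_i$, with coefficients in $\K[x,\ov y_0,\ov y]$. (Here, when $j=0$ one uses the convention $y_{00}=0$, so that $m'=\tilde m$ and the representation is trivial; when $j\geq 1$, $y_{0j}$ is a genuine indeterminate of $\K[x,\ov y_0,\ov y]$ and the above expression is well formed.)

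\medskip

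Uniqueness of the tuple $(q_0,\dots,q_n)$ follows immediately from the fact that  $p'\in \K[x,\ov y_0,\ov y,\ov y']$, a polynomial ring in which the primed indeterminates $y'_1,\dots,y'_n$ are algebraically independent from the others: two affine expressions $q_0+\sum_i q_i y'_i$ and $\tilde q_0+\sum_i \tilde q_i y'_i$ coincide as polynomials only if $q_i=\tilde q_i$ for all $i$. I do not anticipate a substantial obstacle; the only point that requires some care is the bookkeeping in the inductive step, specifically the case $j=0$ where the convention $y_{00}=0$ collapses the recursion (consistently with the base clause $(x)'=1$) and ensures no $y'_0$ indeterminate needs to be introduced.
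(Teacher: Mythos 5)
Your proof is correct and follows essentially the same route as the paper's: reduce to monomials by linearity, induct on total degree using the recursive clause of Definition~\ref{def:stream_der} to exhibit the affine form, and derive uniqueness from the algebraic independence of the $y'_i$ in $\K[x,\ov y_0,\ov y,\ov y']$. If anything, your explicit treatment of the $j=0$ case (where $y_{00}=0$ collapses the recursion and no $y'_0$ indeterminate arises) is slightly more careful than the paper's, which leaves that degenerate case implicit.
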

\begin{proof}
\revTwo{
	Let us first consider existence.
	We first consider the case in which $p$ is a monomial and we proceed by induction on its total degree.
	The base case follows by the first three cases of Def.~\ref{def:stream_der}:
	for $p=1$, set all $q_i$'s to 0;
	for $p=x$, set $q_0=1$ and all other $q_i$'s to 0;
	for $p=y_i$, set $q_i=1$ and all other $q_j$'s to 0.
	For the inductive case, consider  $p=y_i\cdot m$  with $m\neq 1$ and $y_i=\min(y_i\cdot m)$. By induction hypothesis, there is a unique $(n+1)$-tuple  $(\hat q_0, \hat q_1,...,\hat q_n)$ of polynomials such that $m'=\hat q_0+\sum_{j=1}^n \hat q_j\cdot y'_j$.
	By the fourth case  of Def.~\ref{def:stream_der}, $p':=y'_i\cdot m+y_{0i}\cdot m'$;  then, it suffices to set $q_0=y_{0i}\cdot \hat q_0$, $q_i=y_{0i}\cdot \hat q_i+m$, and all other $q_j$'s to $y_{0i}\cdot \hat q_j$.
	The case when $p$ is a linear combination of monomials follows by  linearity.
	
	As to uniqueness, suppose there are two tuples $(q_0,q_1,...,q_n)$ and  $(\tilde q_0,\tilde q_1,...,\tilde q_n)$ of polynomials in $\K[x,\ov y_0,\ov y]$ such that $p'=q_0+\sum_{i=1}^n q_i\cdot y'_i = \tilde q_0+\sum_{i=1}^n \tilde q_i\cdot y'_i$. This implies $(q_0-\tilde q_0)+\sum_{i=1}^n (q_i-\tilde q_i)\cdot y'_i = 0$. For each $j=1,...,n$, the indeterminate $y'_j$ in the last sum does not occur in any of the terms $(q_i-\tilde q_i)$ ($0\leq i\leq n$), which implies that $(q_j-\tilde q_j)=0$, hence $q_j=\tilde q_j$. This in turn implies $q_0-\tilde q_0=0$, hence  $q_0=\tilde q_0$ as well.
}
\end{proof}

For reasons that will be apparent in a while, we introduce  the following suggestive notation for the   polynomials $q_i$ uniquely determined by $p$ according to Lemma \ref{lemma:chainrule}:
\begin{align*}
\frac{\dh p }{\dh x}&:=q_0 &\frac{\dh p }{\dh y_i}:=q_i\ \ \text{($i=1,...,n$)}&& \nablah_{\ov y}p&:= \left(\frac{\dh p }{\dh y_1},...,\frac{\dh p }{\dh y_n} \right)\,.
\end{align*}
With this notation,  the equality for $p'$ in the lemma can be written in the form of a chain rule:
\begin{align}\label{eq:chr}
p'&:=\frac{\dh p }{\dh x}+(\nablah_{\ov y}p)\cdot {\ov y'}^T\,.
\end{align}
Also, it is easy to check that $\frac{\dh p }{\dh x}\in \Pol$, so that one may write  $\frac{\dh p }{\dh x}(x,\ov y)$ if wanting to emphasize  the dependence on   indeterminates.  \revOne{Practically, $\frac{\dh p }{\dh y_i}$  ($1\leq i\leq n$) can be easily computed from $p'$ by taking its quotient with respect to $y'_i$:  this means  expressing the syntactic stream derivative as $p'=(\cdots) + q\cdot y'_i$, with $y'_i$ not occurring in $(\cdots)$, then letting $\frac{\dh p }{\dh y_i}=q$. Likewise,  $\frac{\dh p }{\dh x}$  can be computed by removing from $p'$ all terms divisible by some $y'_i$. A few  examples are discussed below.

\begin{exa}[partial stream derivatives]\label{ex:psd}
	Let $\alpha$ be a monomial. For $x$ not occurring in $\alpha$, $j\geq 1$ and $y_i\neq x$, we have:
	\begin{itemize}
		\item $\frac{\dh }{\dh x}x^{j} \alpha=x^{j-1} \alpha$,  $\frac{\dh }{\dh x}  \alpha=0$ and $\frac{\dh }{\dh y_i}x^{j} \alpha=0$;
		\item for  $y_i$ not occurring in $\alpha$,  $\frac{\dh }{\dh y_i}  \alpha=0$.
	\end{itemize}
	When $y_i$ occurrs in $\alpha$,   the position of $y_i$ in the total order of variables plays a role in the result (only at a syntactic level, cf. Remark \ref{rem:totalorder}).   As an example,   $\frac{\dh }{\dh y_2}2y_1^2y_2^2y_3=2y_{01}^2y_3(y_2+y_{02})$.

	The partial stream derivative operator is linear.
	As an example, for $p=x^2y_1y_2^3+2y_1y_2^2 +2x +1$, we have: $\frac{\dh p }{\dh x}=xy_1y_2^3  +2$, $\frac{\dh p }{\dh y_1}=  2 y_2^2$ and $\frac{\dh p }{\dh y_2}=2y_{01}(y_{02}+y_2)$.
	These are all instances of a general rule expressed by equation   \eqref{der},  which is established in the proof of Lemma \ref{lemma:equiv}.
\end{exa}
}

{
The following lemma translates the syntactic formula \eqref{eq:chr} in terms of streams. Its proof is an immediate consequence
of \eqref{eq:chr} and of  Lemma \ref{PIPPO}.

\begin{lem}[chain rule for stream derivative]\label{lemma:crsd}
	For any $\ov\sigma$ and $\ov r_0=\ov\sigma(0)$, we have:
	$$(p(X,\ov\sigma))'=\frac{\dh p }{\dh x}(X,\ov r_0,\ov\sigma)+(\nablah_{\ov y}p)(X,\ov r_0,\ov\sigma)\cdot {\ov \sigma'}^T.$$
\end{lem}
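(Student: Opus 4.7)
The plan is essentially to combine the syntactic chain rule \eqref{eq:chr} with the semantic identity from Lemma~\ref{PIPPO}, since the former is a purely formal polynomial identity while the latter lets us transport such identities from $\K[x,\ov y_0,\ov y,\ov y']$ to streams by substitution.

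First, I would recall that by Lemma~\ref{lemma:chainrule} and the definitions that follow, the syntactic stream derivative admits the canonical decomposition
$$p'(x,\ov y_0,\ov y,\ov y') \;=\; \frac{\dh p}{\dh x}(x,\ov y_0,\ov y) \;+\; \bigl(\nablah_{\ov y} p\bigr)(x,\ov y_0,\ov y)\cdot {\ov y'}^T,$$
which is an equality of polynomials in $\K[x,\ov y_0,\ov y,\ov y']$. Being a polynomial identity, it is preserved under every $\K$-algebra homomorphism into $\Str$; in particular, applying the substitution $x\mapsto X$, $\ov y_0\mapsto \ov r_0$, $\ov y\mapsto \ov\sigma$, $\ov y'\mapsto \ov\sigma'$ yields
$$p'(X,\ov r_0,\ov\sigma,\ov\sigma') \;=\; \frac{\dh p}{\dh x}(X,\ov r_0,\ov\sigma) \;+\; \bigl(\nablah_{\ov y} p\bigr)(X,\ov r_0,\ov\sigma)\cdot {\ov\sigma'}^T.$$

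Second, I would invoke Lemma~\ref{PIPPO} on the left-hand side: since $\ov r_0=\ov\sigma(0)$ by hypothesis, that lemma gives
$$\bigl(p(X,\ov\sigma)\bigr)' \;=\; p'(X,\ov\sigma(0),\ov\sigma,\ov\sigma') \;=\; p'(X,\ov r_0,\ov\sigma,\ov\sigma').$$
Chaining the two equalities produces exactly the statement of the lemma.

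There is no real obstacle here, as both ingredients have already been established: Lemma~\ref{lemma:chainrule} supplies the syntactic decomposition (which is what motivates the notation $\frac{\dh p}{\dh x}$ and $\nablah_{\ov y}p$ in the first place), and Lemma~\ref{PIPPO} ensures that syntactic derivation commutes with substitution of streams. The only mildly subtle point worth emphasising in the write-up is that $\frac{\dh p}{\dh x}$ and $\nablah_{\ov y}p$ are polynomials in $\K[x,\ov y_0,\ov y]$ only (i.e.\ contain no primed indeterminates), so the substitution on the right-hand side makes sense and produces genuine streams before the dot product with $\ov\sigma'^T$ is performed.
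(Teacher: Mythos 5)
Your proof is correct and matches the paper's own argument exactly: the paper proves the lemma as an immediate consequence of the syntactic decomposition \eqref{eq:chr} and the substitution-commutation result of Lemma~\ref{PIPPO}, which is precisely the two-step chain you lay out. The extra remark about $\frac{\dh p}{\dh x}$ and $\nablah_{\ov y}p$ lying in $\K[x,\ov y_0,\ov y]$ is a nice sanity check but adds nothing essential beyond the paper's reasoning.
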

}

Now we assume $|\E|=n$, say  $\E=\{p_1,...,p_n\}$.  Fixing some order on its elements, we will sometimes regard $\E$ as a \emph{vector}  of polynomials, and use the  notation    $\E(x,\ov y)$ accordingly. In particular, we let $\nablah_{\ov y}\E$ denote the $n\times n$ matrix of polynomials whose rows are $\nablah_{\ov y}p_i$, for $i=1,...,n$. Evidently, this is the stream analog of the \emph{Jacobian} of $\E$. Moreover, we let $\frac{\dh \E }{\dh x}:=\left(\frac{\dh p_1 }{\dh x},....,\frac{\dh p_n }{\dh x}\right)$. The following lemma is an immediate consequence of \revOne{the  fact that $\E(X,\ov\sigma)'=0$ and of previous lemma, considering $\E$ componentwise}.

\begin{lem}\label{lemma:jacob}
Let $\ov\sigma=(\sigma_1,...,\sigma_n)$ be a solution of $\E$ and $\ov r_0=\ov\sigma(0)$. Then
\begin{align}\label{eq:basic}
	(\nablah_{\ov y}\E)(X,\ov r_0,\ov\sigma)\cdot \ov\sigma'^T+\left(\frac{\dh \E }{\dh x}(X,\ov\sigma)\right)^T&=0.
\end{align}
\end{lem}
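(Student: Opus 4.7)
The proof is essentially a componentwise bookkeeping exercise combining the hypothesis with the chain rule of Lemma~\ref{lemma:crsd}. The plan is as follows.

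First, I would unpack what ``$\ov\sigma$ is a solution of $\E$'' means: by definition, $p_i(X,\ov\sigma)=0$ for every $i=1,\ldots,n$. Since the stream $0=(0,0,\ldots)$ has stream derivative $0$, it follows that $(p_i(X,\ov\sigma))'=0$ for each $i$.

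Next, I would apply Lemma~\ref{lemma:crsd} to each $p_i$ separately, obtaining
\[
0 \;=\; (p_i(X,\ov\sigma))' \;=\; \frac{\dh p_i}{\dh x}(X,\ov r_0,\ov\sigma) + (\nablah_{\ov y}p_i)(X,\ov r_0,\ov\sigma)\cdot \ov\sigma'^T
\]
for every $i=1,\ldots,n$. Here I am using the hypothesis $\ov r_0 = \ov\sigma(0)$ to instantiate the lemma's $\ov r_0$; note that $\frac{\dh p_i}{\dh x}\in\Pol$, so it is evaluated only at $(X,\ov\sigma)$ rather than at $(X,\ov r_0,\ov\sigma)$, matching the statement.

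Finally, I would stack the $n$ scalar equalities into a single vector equation. The $i$-th row of the matrix $(\nablah_{\ov y}\E)(X,\ov r_0,\ov\sigma)$ is, by definition, $(\nablah_{\ov y}p_i)(X,\ov r_0,\ov\sigma)$, and the $i$-th component of $\left(\frac{\dh\E}{\dh x}(X,\ov\sigma)\right)^T$ is $\frac{\dh p_i}{\dh x}(X,\ov\sigma)$. Rearranging the componentwise identity to put the Jacobian term first and the $\frac{\dh\E}{\dh x}$ term second then yields exactly \eqref{eq:basic}.

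There is no real obstacle here: once Lemma~\ref{lemma:crsd} is available, the only thing to check is that the matrix/vector layout in the statement agrees, row by row, with the scalar instances of the chain rule — a matter of notation rather than mathematics.
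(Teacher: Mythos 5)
Your proof is correct and is exactly the expansion the paper has in mind: the paper presents the lemma as an immediate consequence of $\E(X,\ov\sigma)'=0$ and Lemma~\ref{lemma:crsd} applied componentwise, and your three steps (derivative of $0$ is $0$, apply the chain rule to each $p_i$, assemble the rows into the matrix equation) are precisely that argument spelled out.
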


{\begin{exa}
	\label{ex:part1}
	Consider the polynomial system made up of a single equation,   $\E=\{p\}$,  where  $p(x,\ov y):=y-(1+xy^2)$ with $\ov y=y=y_1$
	(see also \cite[pg. 117]{Rut05}). We compute
	\begin{align*}
		(\nablah_{\ov y}\E)(x,\ov r_0,\ov y)&=\nablah_y p(x,r_0,y)
		\, =\frac{\dh  p}{\dh y}(x,r_0,y)\,=1\\
		\frac{\dh \E }{\dh x}(x, \ov y)&=\frac{\dh p }{\dh x}(x, y)\,
		=\,-y^2\,.
	\end{align*}
	Hence, for any stream solution $\sigma$ of $\E$, applying Lemma \ref{lemma:jacob}  we get:	
	$$\sigma'-\sigma^2=0\,.$$
	Consider now    $\E=\{q\}$ with $q(x,y):= x^2+y^2-1$, where $y=y_1$ and $y_0=y_{01}$. We compute
	\begin{align*}
		(\nablah_y \E)(x,r_0,y)&=y+y_0\\
		\frac{\dh q }{\dh x}(x,y)&=x\,.
	\end{align*}
	Hence, for any stream solution $\sigma$ of $\E$, applying Lemma \ref{lemma:jacob}  we get:
	$$(\sigma+ \sigma(0))\cdot \sigma' + X=0\,.$$
	%
	%
	%
\end{exa}}

Let us recall a few facts from the theory of matrices and determinants in a commutative ring, applied to the ring $\Sigma$. By definition, a matrix of streams $A\in \Sigma^{n\times n}$ is invertible iff there exists a matrix of streams $B\in \Sigma^{n\times n}$ such that $A\times B=B\times A=I$ (the identity matrix of streams); this $B$, if it exists, is unique and denoted by $A^{-1}$.
It is easy to show that   $A\in \Sigma^{n\times n}$ is invertible if and only if $A(0)\in \K^{n\times n}$ is invertible\footnote{Note this is true only because we insist that the inverse matrix must also lie in $\Sigma^{n\times n}$. Working in the field of formal \emph{Laurent} series, which strictly includes $\Sigma$, this would be   false: e.g. $X(0)=0$, but $X$ has $X^{-1}$ as an inverse.}. By general results on determinants,  $\det(A\times B)=\det(A)\cdot\det(B)$ (Binet's theorem). For streams, this implies that, if $A$ is invertible, then $\det(A)$ as a stream is invertible, that is $\det(A)(0)\neq 0$. Moreover, again by virtue of these general results, the formula for the element of row $i$ and column $j$ of $A^{-1}$ is given by:
\begin{align}
A^{-1}(i,j)&=(-1)^{i+j}\det(A)^{-1}\cdot \det(A_{ji})\label{eq:invA}
\end{align}
where $A_{ji}$ denotes the $(n-1)\times (n-1)$ \emph{adjunct} matrix obtained from $A$ by deleting its $j$-th row and $i$-th column. Also note that, for a $n\times n$ matrix of polynomials, say $P=P(x,\ov y_0,\ov y)$, $\det(P)$ is a polynomial in $x,\ov y_0,\ov y$, and  $\det(P(X,\ov r_0,\ov \sigma))=(\det(P))(X,\ov r_0,\ov \sigma)$.

\begin{thm}[IFT for streams]
\label{th:ift}
Let $\ov r_0\in \K^n$ be such that $\E(0,\ov r_0)=0$ and  $(\nablah_{\ov y}\E)(0,\ov r_0,\ov r_0)$ is invertible as a matrix in $\K^{n\times n}$. Then there is a unique stream solution $\ov\sigma$ of $\E$  such that $\ov\sigma(0)=\ov r_0$. Moreover, $(\nablah_{\ov y}\E)(X,\ov r_0,\ov\sigma)$ is invertible as a matrix in $\Sigma^{n\times n}$ and $\ov\sigma$ satisfies the following system of  $n$ rational \sde s and initial conditions:
\begin{align}\label{eq:jacobSDE}
	\ov\sigma'^T&=-(\nablah_{\ov y}\E)(X,\ov r_0,\ov\sigma)^{-1}\cdot \left(\frac{\dh \E }{\dh x}(X,\ov\sigma)\right)^T &&\ov\sigma(0)=\ov r_0 \,.
\end{align}
Moreover,  from \eqref{eq:jacobSDE} it is possible to build a system  of $n+1$ \emph{polynomial} \sde s in $n+1$ variables and corresponding initial conditions, whose unique solution is $(\ov\sigma,\tau)$, for a suitable $\tau$.
\end{thm}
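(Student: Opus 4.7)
The plan is to reduce the theorem to Theorem \ref{thm:main} by way of Lemmas \ref{lemma:jacob}, \ref{lemma:crsd}, and \ref{lemma:rational}, thereby mirroring the proof of the classical IFT in which the \ode\ system is derived from the Jacobian equation and then solved.

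First I would note that for any stream tuple $\ov\sigma$ with $\ov\sigma(0)=\ov r_0$, the matrix $(\nablah_{\ov y}\E)(X,\ov r_0,\ov\sigma)\in\Sigma^{n\times n}$ has $0$-th coefficient equal to $(\nablah_{\ov y}\E)(0,\ov r_0,\ov r_0)$, which is invertible in $\K^{n\times n}$ by hypothesis; by the general fact recalled just before the theorem statement, the whole matrix is then invertible in $\Sigma^{n\times n}$. If $\ov\sigma$ is a solution of $\E$, then Lemma \ref{lemma:jacob} gives \eqref{eq:basic}, which upon pre-multiplying by this inverse yields exactly \eqref{eq:jacobSDE}. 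Setting $g:=\det(\nablah_{\ov y}\E)\in\K[x,\ov y_0,\ov y]$ and using the cofactor formula \eqref{eq:invA}, one can rewrite each equation of \eqref{eq:jacobSDE} in the form $\sigma'_i = f_i(X,\ov r_0,\ov\sigma)\cdot g(X,\ov r_0,\ov\sigma)^{-1}$ for suitable polynomials $f_i$. Since $g(0,\ov r_0,\ov r_0) = \det\bigl((\nablah_{\ov y}\E)(0,\ov r_0,\ov r_0)\bigr)\neq 0$, Lemma \ref{lemma:rational} applies and converts this rational system into a polynomial \sde\ system in $n+1$ variables $(\ov y,w)$, with initial conditions, whose solutions are of the form $(\ov\sigma,\tau)$ with $\tau:=g(X,\ov r_0,\ov\sigma)^{-1}$. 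Theorem \ref{thm:main} delivers a unique such solution, which already establishes the last clause of the theorem.

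The crux is then verifying that this $\ov\sigma$ actually satisfies $\E$. For each $p\in\E$, the stream $p(X,\ov\sigma)$ has $0$-th coefficient $p(0,\ov r_0)=0$ by the hypothesis $\E(0,\ov r_0)=0$. Applying Lemma \ref{lemma:crsd} componentwise to the polynomials of $\E$, one obtains $(\E(X,\ov\sigma))'^T = \bigl(\tfrac{\dh \E}{\dh x}(X,\ov\sigma)\bigr)^T + (\nablah_{\ov y}\E)(X,\ov r_0,\ov\sigma)\cdot \ov\sigma'^T$, which vanishes by construction of \eqref{eq:jacobSDE}. Hence each $(p(X,\ov\sigma))'=0$, and by the fundamental theorem of the stream calculus \eqref{eq:ftsc} one concludes $p(X,\ov\sigma)=0$.

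Finally, uniqueness follows because any other stream solution $\ov\sigma^*$ of $\E$ with $\ov\sigma^*(0)=\ov r_0$ would, by the derivation of \eqref{eq:jacobSDE}, extend to a solution $(\ov\sigma^*,\tau^*)$ of the same polynomial \sde\ system as $(\ov\sigma,\tau)$, and Theorem \ref{thm:main} forces $\ov\sigma^*=\ov\sigma$. The main point of delicacy is the third paragraph: without the hypothesis $\E(0,\ov r_0)=0$, the chain-rule argument would only force $(p(X,\ov\sigma))'=0$, leaving each $p(X,\ov\sigma)$ equal to its (possibly nonzero) constant initial value, so it is precisely the initial-value hypothesis, invoked together with \eqref{eq:ftsc}, that closes the loop.
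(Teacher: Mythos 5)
Your proof follows the paper's own argument almost exactly: invert the stream Jacobian (using the observation that invertibility of a matrix in $\Sigma^{n\times n}$ is governed by its $0$-th coefficient), convert \eqref{eq:jacobSDE} to the rational form via the cofactor formula, apply Lemma \ref{lemma:rational} to obtain a polynomial \sde\ system in $(\ov y, w)$, invoke Theorem \ref{thm:main} for existence and uniqueness of $(\ov\sigma,\tau)$, then close the loop with Lemma \ref{lemma:crsd} and \eqref{eq:ftsc}, and finish uniqueness by funnelling any competing solution through the same polynomial \sde\ system. The only place where you compress slightly is in the step ``which vanishes by construction of \eqref{eq:jacobSDE}'': strictly, at that point $\ov\sigma$ is obtained as a solution of the \emph{polynomial} \sde\ system, and one must invoke the converse clause of Lemma \ref{lemma:rational} to conclude that $\ov\sigma$ does satisfy the rational system \eqref{eq:jacobSDE} before the cancellation can take place. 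The paper makes this step explicit; it is implicit (but recoverable) in your write-up, so this is a presentation issue rather than a gap. Your closing remark about the essential role of the hypothesis $\E(0,\ov r_0)=0$ matches the paper's reliance on the fundamental theorem \eqref{eq:ftsc} and is a useful observation.
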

\begin{proof}
We will first show that the initial value problem   given in \eqref{eq:jacobSDE}   is satisfied by every, if any, stream solution $\ov \sigma$ of $\E$ such that $\ov\sigma(0)=\ov r_0$. 
Indeed, consider any such $\ov\sigma$. As $(\nablah_{\ov y}\E)(X,\ov r_0,\ov\sigma)(0)=(\nablah_{\ov y}\E)(0,\ov r_0,\ov r_0)$, which is invertible by hypothesis, the above considerations on matrix invertibility imply that there exists   $(\nablah_{\ov y}\E)(X,\ov r_0,\ov\sigma)^{-1}$ in $\Sigma^{n\times n}$.  Multiplying equality \eqref{eq:basic} from Lemma \ref{lemma:jacob} to the left by $(\nablah_{\ov y}\E)(X,\ov r_0,\ov\sigma)^{-1}$, we obtain that $\ov\sigma$ satisfies  \eqref{eq:jacobSDE}. Now define the following (matrix of) polynomials:
\begin{itemize}
	\item $g(x,\ov y_0,\ov y):=\det(\nablah_{\ov y}\E)$
	\item $\tilde A:=[\tilde a_{ij}]$ with $\tilde a_{ij}:=(-1)^{i+j}\det((\nablah_{\ov y}\E)_{ji})$
	\item $f_i(x,\ov y_0,\ov y):=-\tilde A_i\cdot \left(\frac{\dh \E }{\dh x}\right)^T$, where $\tilde A_i$ denotes the $i$-th row of   $\tilde A$.
\end{itemize}
Applying our previous observation on the determinant of a matrix of polynomials, we have that $\det((\nablah_{\ov y}\E)(X,\ov r_0,\ov\sigma))=g(X,\ov r_0,\ov\sigma)$, and similarly $(-1)^{i+j}\det((\nablah_{\ov y}\E)(X,\ov r_0,\ov\sigma))_{ji})=\tilde a_{ij}(X,\ov r_0,\ov \sigma)$. Therefore, by the formula for the inverse matrix \eqref{eq:invA}, equation \eqref{eq:jacobSDE} can be written componentwise as follows
\begin{align}\label{eq:ratproof}
	\sigma'_i&= f_i(X,\ov r_0,\ov \sigma)\cdot g (X,\ov r_0,\ov \sigma)^{-1} &&\sigma_i(0)=r_{i0} &&   \text{($i=1,...,n$)}\,.
\end{align}
This is precisely the rational form in \eqref{eq:rat}. 
Then Lemma \ref{lemma:rational} implies that there is a set $\Eqs$ of $n+1$ polynomial \sde s in the indeterminates $\ov y,w$, and corresponding initial conditions $\rho:=(\ov r_0,g(0,\ov r_0,\ov r_0)^{-1})$, satisfied when letting $\ov y,w=\ov \sigma,\tau$, where $\tau=g (X,\ov r_0,\ov \sigma)^{-1}$:
\begin{align}\label{eq:polyOneproof}
	y'_i&= f_i(x,\ov r_0,\ov y)\cdot w  && y_i(0)=r_{i0} && \text{($i=1,...,n$)}\\
	w'&= -g(0,\ov r_0,\ov r_0)^{-1}\cdot  h(x,\ov r_0,\ov y,  w) \cdot w && w(0)=g(0,\ov r_0,\ov r_0)^{-1}  \label{eq:polyTwoproof}
\end{align}
with $h$ obtained from $g$ as described in  Lemma \ref{lemma:rational}. Note the \sde s    $\Eqs$   we have arrived at are purely syntactic and do not depend on the existence of any specific $\ov\sigma$. Now, by Theorem \ref{thm:main} there is a (unique)  solution, say $(\ov\sigma,\tau)$, of the polynomial \sde\ initial value problem $(\Eqs,\rho)$ defined by \eqref{eq:polyOneproof}-\eqref{eq:polyTwoproof}.

We now show that $\ov\sigma$ is a stream solution of $\E$.  By the last part of Lemma \ref{lemma:rational}, $\ov\sigma$ satisfies \eqref{eq:ratproof}, which, as discussed above,   is just another way of writing  \eqref{eq:jacobSDE}. Now  we have
\begin{align*}
	\E(0,\ov\sigma)(0)&=\E(0,\ov r_0)=0\\
	\E(0,\ov\sigma)'&= (\nablah_{\ov y}\E)(X,\ov r_0,\ov\sigma) \,\cdot\, \ov\sigma'^T\;+\; \left(\frac{\dh \E }{\dh x}(X,\ov\sigma)\right)^T\\
	&=  -\,(\nablah_{\ov y}\E)(X,\ov r_0,\ov\sigma)\,\cdot \,  (\nablah_{\ov y}\E)(X,\ov r_0,\ov\sigma)^{-1}\cdot \left(\frac{\dh \E }{\dh x}(X,\ov\sigma)\right)^T \;+\; \left(\frac{\dh \E }{\dh x}(X,\ov\sigma)\right)^T\\
	&=-\left(\frac{\dh \E }{\dh x}(X,\ov\sigma)\right)^T+\left(\frac{\dh \E }{\dh x}(X,\ov\sigma)\right)^T\\
	&=0
\end{align*}
where the second equality is just the chain rule on streams \revOne{(Lemma \ref{lemma:crsd})}, and the third equality follows from \eqref{eq:jacobSDE}. As $\E(0,\ov\sigma)(0)=0$ and $\E(0,\ov\sigma)'=0$, by e.g. the fundamental theorem of the stream calculus \eqref{eq:ftsc} it follows that $\E(0,\ov\sigma)=0$.
This completes the \emph{existence} part of the statement.

As to \emph{uniqueness}, consider any tuple of streams $\ov{ \zeta}\in \Str^n$ that is a stream solution of $\E$ and such that $\ov{\zeta}(0)=\ov r_0$. As shown above, $(\ov\zeta,\xi)$, with $\xi=g(X,\ov r_0,\ov{\zeta})^{-1}$,  satisfies the polynomial \sde\  initial value problem $(\Eqs,\rho)$ defined by \eqref{eq:polyOneproof}-\eqref{eq:polyTwoproof}. By uniqueness of the solution (Theorem \ref{thm:main}), $(\ov{\zeta},\xi)=(\ov\sigma,\tau)$.
\end{proof}

The above theorem  guarantees existence  and uniqueness of a  solution of $\E$, provided that there exists a  unique tuple of  ‘‘initial conditions'' $\ov r_0\in \K^n$ for which $\E$ satisfies the hypotheses of Theorem \ref{th:ift}. The existence and uniqueness of such a $\ov r_0$ must be ascertained by other means. In particular, it is possible that the algebraic conditions $\E(0,\ov r_0)=0$ and  $\det((\nablah_{\ov y}\E)(x,\ov r_0,\ov r_0))\neq 0$ are already sufficient to uniquely determine $\ov r_0$.
There are powerful tools from algebraic geometry that can be applied to this  purpose, such as elimination theory: we refer the interested reader to \cite{Cox} for an introduction.  For now we shall content ourselves with a couple of elementary examples. An extended example is presented in Section \ref{sec:casestudy}.

\begin{exa}
\label{ex:guarded}
\revOne{Let us consider again $\E=\{p\}$  with  $p(x,y):=y-(1+xy^2)$, described in Example \ref{ex:part1}.
	Note that $p(0,r_0)=0$ uniquely identifies  the initial condition $r_0=\sigma(0)=1$. Also note that $\nablah_y p =1$  is invertible at $y=r_0=1$: hence Theorem \ref{th:ift} applies.} The system   \eqref{eq:jacobSDE}  followed by the   transformation of Lemma \ref{lemma:rational} becomes the following polynomial system of \sde s and initial conditions:
\begin{align*}
	y'&=y^2w & y(0)&=1\\
	w'&=0 & w(0)&=1\,.
\end{align*}
Note that the \sde s and initial condition for  $w$ define  the constant stream $1=(1,0,0,...)$, hence the above system can be simplified to the single \sde\ and initial condition: $y'=y^2$ and $y(0)=1$. The unique stream solution of this initial value problem is $\sigma=(1, 1, 2, 5, 14, 42,...)$, the stream of Catalan numbers. Hence $\sigma$ is the only stream solution of $\E$. 

More generally, any set of \emph{guarded} polynomial equations \revOne{\cite{BBHR14}} of the form $\E=\{y_i-(c_i+xp_i)\,:\,i=1,...,n\}$ satisfies the hypotheses   of Theorem \ref{th:ift} precisely when  $\ov r_0=(c_1,...,c_n)$.
Indeed, $\E(0,\ov r_0)=0$, while $\nablah_{\ov y}\E=I$, the $n\times n$ identity matrix, which is clearly invertible. The \sde s  and initial conditions $(\Eqs,\rho)$ determined by the theorem are   given by  $y'_i=p_i$ and $y_i(0)=c_i$ for $i=1,...,n$, plus the trivial   $w'=0$ and $w(0)=1$, that can be omitted.

\revOne{For a  {non} guarded example, consider $\E=\{q\}$ where $q:= x^2+y^2-1$, again  from Example \ref{ex:part1}.  $q(0,r_0)=0$ gives two possible values, $r_0=\pm 1$. Let us fix $r_0=1$. We have $\nablah_y p=y+y_0$, which is $\neq 0$ when evaluated at $y=y_0=r_0$. Applying Theorem \ref{th:ift} and Lemma \ref{lemma:rational} yields the following \sde s and initial conditions:
	\begin{align*}
		y'&=-xw & y(0)&=1\\
		w'&=\frac{xw^2}2 & w(0)&=\frac 1 2\,.
	\end{align*}
	The \sde\ for $w$ arises considering equation \eqref{eq:inv} for the multiplicative inverse of a stream, in detail, letting $w=\frac{1}{y+1}$, we get: $w'=-w(0)\cdot(y+1)'\cdot w= -\frac{1}{2}\cdot (-xw) \cdot w=\frac{xw^2}2$.}

The unique   solution of the derived initial value problem is the stream $\sigma=(1, 0, -1/2, 0,$ $-1/8, 0,-1/16,...)$;  these are the Taylor coefficients of the function $\sqrt{1-x^2}$ around $x=0$. This stream is therefore the unique solution of $\E$ with $r_0=1$. If we fix $r_0=-1$, we obtain $-\sigma$ as the unique solution, as expected.
\end{exa}

\begin{rem}[relation with algebraic series]\label{rem:algebraic}
Recall from \cite{flajo,Stanley} that a stream $\sigma$ is \emph{algebraic} if there exists a nonzero polynomial $p(x,y)$ in the variables $x,y$ such that $p(X,\sigma)=0$.
For   $|\E|>1$, algebraicity of the solution is not in general guaranteed.  \cite[Th.8.7]{flajo} shows that a sufficient condition for algebraicity in this case is that $\E$ be \emph{zero-dimensional}, i.e. that $\E$ has finitely many  solutions when considered as a set of polynomials with coefficients in $\C(x)$, the fraction field of univariate polynomials in $x$ with coefficients in $\C$. In this case, in fact, for each variable $y_i$ one can apply results from elimination theory to get a single nonzero polynomial $p(x,y_i)$ satisfied by $\sigma_i$. See also the discussion in Section \ref{sec:casestudy}.

On the other hand, we do not require zero-dimensionality of $\E$ in Theorem \ref{th:ift}.  Moreover, for the case of polynomials with rational coefficients, \cite[Cor.5.3]{BG21}   observes   that the  unique solution of a polynomial \sde\   initial value problem like \eqref{eq:ivp} is a tuple of   {algebraic} streams.    Then, an immediate corollary of  Theorem \ref{th:ift} is that, under the conditions stated for $\E$ and $\ov r_0$, the unique stream solution of $\E$ is algebraic, even for positive-dimensional systems ---  at least in the case of polynomials with rational  coefficients.
As an example, consider the following system of three polynomials in the variables $x$ and $\ov y= (y_1,y_2,y_3)$:
\begin{align}\label{eq:nonzerod}
	\E= \left\{ \right.& {y_1}{y_3}^{4}+{x}^{2}-{{y_2}}^{2}+{y_2}\; ,\; -{{y_1}}^{2}{y_2}+x{y_3}+{y_1}\;,\\
	&\left. -{{y_1}}^{3}x{{y_3}}^{5}+{{y_1}}^{4}{{y_3}}
	^{4}-{{y_1}}^{2}{x}^{3}{y_3}+{x}^{2}{{y_1}}^{3}+{x}^{2}{y_2}{{y_3}}^{2}-{x}^{2}{{y_3}}^{2}+{{y_1}}
	^{2}-x{y_3}-{y_1} \right\}\,.\nonumber
\end{align}
Considered as a system of polynomials with coefficients in $\C(x)$, $\E$ is not zero-dimensional   --- in fact, its dimension is 1\footnote{\revOne{As checked with Maple's \textsf{IsZeroDimensional} function of the \textsf{Groebner} package.}}. It is readily checked, though, that  for $\ov r_0=(1,1,1)$  we have   $\E(0,\ov r_0)=0$ and $\det((\nablah_{\ov y} \E)(0,\ov r_0, \ov r_0))=12\neq 0$. From Theorem \ref{th:ift}, we   conclude that the unique stream solution $\ov\sigma$ of $\E$ satisfying  $\ov\sigma(0)=\ov r_0$ is algebraic.
\end{rem}

\ifmai
\begin{example}[feedback loop]\label{ex:feedback} Consider the feedback loop example, where $\E$ contains the following polynomials, with $u_0,g_0,h_0\in \reals$.
\begin{align*}
	&u-(u_0+xp_1) & & g-(g_0+xp_2) & & h-(h_0+xp_3) && c - (u - hc)g\,.
\end{align*}
Let $\ov y:=(u,g,h,c)$. Let us assume  at first that   $g_0h_0\neq -1$. Then  $\E(0,\ov r_0)=0$  has a unique solution $\ov r_0=(u_0,g_0,h_0,c_0)$ with $c_0=\frac{g_0u_0}{1+g_0h_0}$. 
The nonsingular stream Jacobian of $\E$ is  and its inverse are
\begin{align*}
	\nablah_{\ov y}\E&=\left[\begin{smallmatrix} 1& 0   & 0 & 0\\0& 1   & 0 & 0\\0& 0   & 1 & 0\\-g & ch - u_0 & cg_0 &  g_0 h_0 + 1\end{smallmatrix}\right] &&
	\nablah_{\ov y}\E^{-1}=\left[\begin{smallmatrix} 1& 0   & 0 & 0\\0& 1   & 0 & 0\\0& 0   & 1 & 0\\ \frac g {g_0 h_0 + 1} & \frac{-c h + u_0}{g_0 h_0 + 1} & \frac{-c g_0}{g_0h_0 + 1},
		& \frac 1 {g_0 h_0 + 1}\end{smallmatrix}\right].
\end{align*}
Now $\frac{\dh \E }{\dh x}=(p_1,p_2,p_3,0)$, hence \eqref{eq:jacobSDE} yields the following system of \sde s and initial condition satisfied for $\ov y=\ov \sigma$ the unique solution:
\begin{align*}
	\ov y'=\left[\begin{smallmatrix}u'\\[2pt] g' \\[2pt] h' \\[2pt] c'\end{smallmatrix}\right]&=-(\nablah_{\ov y}\E)^{-1}\cdot (\frac{\dh \E }{\dh x})^T=\left[\begin{smallmatrix}p_1 \\[2pt] p_2 \\[2pt] p_3 \\[2pt] \frac{c p_3 g_0   + p_2 (c h - u_0) - p_1 g}{g_0 h_0 + 1}\end{smallmatrix}\right] && \ov y(0)=\ov r_0\,.
\end{align*}
This is already in polynomial form.
In   case   $g_0h_0= -1$, we cannot apply Theorem \ref{th:ift}, yet there may be stream solutions of $\E$. To make an example, consider the following concrete instance  of $\E$:
\begin{align*}
	&u  & & g-(1+xg) & & h-(-1+xh) && c - (u - hc)g\,.
\end{align*}
Then $\ov \sigma=(\sigma_1,\sigma_2,\sigma_3,\sigma_4)$, with  $\sigma_1=0$, $\sigma_2=(-1,-1,-1,...)$, $\sigma_3=(1,1,1,...)$ and $\sigma_4=0$, is  a  solution of $\E$, and these streams are trivially definable by a suitable pair  $(\Eqs,\rho)$. However,  $\nablah_{\ov y}\E$ is singular.
\end{example}

\begin{remark}[computing stream partial derivatives]\label{rem:partial}
For the sake of uniform notation,  in what follows let  $y_0$ denote $x$, so we can write a generic monomial in $\Pol$ as $\alpha=y_0^{k_0}\cdots y_n^{k_n}$. Moreover let $y_{00}$ denote $0$, and  $0^0$ denote $1$ in the following. For any $y_i$ and $k\geq 0$, we let $\ccc^k_{y_i} :=\sum_{j=0}^k y_{0i}^j y_i^{k-j}$, with $\ccc^0_{y_i} :=1$ and $\ccc^{-1}_{y_i}:=0$; by our previous convention, we may write $\ccc^k_{y_i}$ also as $\ccc^k_{y_i}(y_{0i},y_i)$. On monomials $\alpha$,   $\frac{\dh\alpha}{\dh y_i}$ for $i=0,...,n$ can be computed as
\begin{align*}
	\frac{\dh\alpha }{\dh y_i}&:= ( y_{00}^{k_0}\cdots y_{0,i-1}^{k_{i-1}}) \, \ccc^{k_i-1}_{y_i} \,(y_{i+1}^{k_{i+1}}\cdots y_{n}^{k_{n}})\,.
\end{align*}
This is extended   by linearity to the whole $\Pol$.
\end{remark}
\fi

\section{Relations with the classical IFT}\label{sec:classical}
We now discuss a relation of our IFT with the classical IFT from calculus; hence, in the rest of the section, we fix $\K=\reals$.
We start with the following lemma.

\begin{lem}\label{lemma:equiv}
Let $p(x,\ov y)$ be a polynomial, $\ov r_0\in \reals^n$, and $y_i$ in $\ov y$. Consider the ordinary  $\frac{\partial p}{\partial y_i}(x,\ov y)$ and stream $\frac{\dh p}{\dh y_i}(x,\ov y_0,\ov y)$ partial derivatives. Then $\frac{\partial p}{\partial y_i}(0,\ov r_0)= \frac{\dh p}{\dh y_i}(0,\ov r_0,\ov r_0)$.
\end{lem}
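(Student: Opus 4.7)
The plan is to reduce to the monomial case by linearity of both $\frac{\partial}{\partial y_i}$ and $\frac{\dh}{\dh y_i}$, then establish an explicit closed-form for the stream partial derivative of a monomial (this is the equation \eqref{der} alluded to in Example \ref{ex:psd}), and finally verify that substituting $y_0 \mapsto 0$ and $y_{0j}, y_j \mapsto r_{0j}$ produces the classical partial derivative evaluated at $(0, \ov r_0)$.

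Concretely, first I would observe that both partial derivative operators are $\K$-linear, so it suffices to prove the identity for a single monomial $\alpha = y_0^{k_0} y_1^{k_1}\cdots y_n^{k_n}$ (using the convention $y_0 = x$). The key step is then to establish, by induction on the total degree of $\alpha$, the formula
\begin{equation}\label{der}
\frac{\dh \alpha}{\dh y_i} \;=\; \Bigl(\prod_{j<i} y_{0j}^{k_j}\Bigr) \cdot \Bigl(\sum_{\ell=0}^{k_i-1} y_{0i}^{\ell}\, y_i^{k_i-1-\ell}\Bigr) \cdot \Bigl(\prod_{j>i} y_{j}^{k_j}\Bigr),
\end{equation}
with the middle factor interpreted as $0$ when $k_i=0$. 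This is obtained by applying Definition \ref{def:stream_der} to peel off the minimum variable occurring in $\alpha$: the rule $(y_j\cdot m)' = y'_j m + y_{0j}\cdot m'$ contributes a term $y'_j\cdot m$ that carries no $y'_i$ when $j\neq i$ (so it is absorbed into the leading product of $y_{0j}$'s after iterating), whereas when $j=i$ the extra summand $y_i^{k_i-1}$ adds the ``new'' top term to the geometric-like sum. The coefficient of $y'_i$ in $\alpha'$ is unique by Lemma \ref{lemma:chainrule}, which legitimises the identification with $\frac{\dh \alpha}{\dh y_i}$.

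Once \eqref{der} is in hand, the evaluation step is routine: substituting $y_{0j} = y_j = r_{0j}$ for every $j\geq 1$ turns each summand $y_{0i}^{\ell}y_i^{k_i-1-\ell}$ into $r_{0i}^{k_i-1}$, so the middle factor collapses to $k_i\, r_{0i}^{k_i-1}$. Setting $y_{00} = 0$ (per the convention recalled just before Definition \ref{def:stream_der}) evaluates $y_{00}^{k_0}$ to $0^{k_0}$, which is $1$ when $k_0=0$ and $0$ otherwise. Altogether,
$$\frac{\dh \alpha}{\dh y_i}(0,\ov r_0,\ov r_0) \;=\; 0^{k_0}\,k_i\,r_{01}^{k_1}\cdots r_{0,i-1}^{k_{i-1}}\,r_{0i}^{k_i-1}\,r_{0,i+1}^{k_{i+1}}\cdots r_{0n}^{k_n},$$
which is precisely $\frac{\partial \alpha}{\partial y_i}(0,\ov r_0)$. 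Linearity then lifts this to arbitrary $p\in \Pol$.

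The main obstacle is the inductive verification of \eqref{der}: the definition of syntactic stream derivative is asymmetric and depends on the fixed total order on variables (cf.\ Remark \ref{rem:totalorder}), so care is needed in tracking which factors end up as $y_{0j}$'s versus $y_j$'s in the product. Once one commits to peeling off the minimum variable, however, the combinatorial bookkeeping is straightforward, and the fact that only the coefficient of $y'_i$ matters means that only the $j=i$ step of the recursion actually contributes to the middle ``geometric sum'' factor.
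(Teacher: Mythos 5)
Your proof is correct and follows essentially the same route as the paper's: establish the closed-form expression \eqref{der} for the stream partial derivative (the paper writes it as $\beta_j(\ov y_0)\,\ccc^{k_j-1}_{y_i}\,\gamma_j(\ov y)$ after splitting off the $x$-part and the $y_i$-power, which coincides with your per-monomial product formula), then observe that both partials collapse to $k_i\,\alpha(\ov r_0)\,r_{0i}^{k_i-1}$ upon substituting $x=0$ and $\ov y_0=\ov y=\ov r_0$. The only presentational difference is that you absorb the $x$-factor uniformly via $y_{00}^{k_0}=0^{k_0}$ and argue \eqref{der} by an explicit induction on total degree, where the paper first separates $p=x\,p_0+q$ and states the formula more directly; both are fine.
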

\begin{proof}
Let $p=x\cdot p_0+q$ where $x$ does not occur in $q$. Write $q$ as a sum of $k$ monomials, $q=\sum_{j=1}^k \alpha_jy^{k_j}_i$, where both $x$ and $y_i$ do  not occur in any of the monomials $\alpha_j$. Moreover, let us write each $\alpha_j$ as $\alpha_j=\beta_j\cdot\gamma_j$, where $\beta_j$ (resp. $\gamma_j$) contains all the $y$'s with index smaller (resp., greater) than $i$.

For the ordinary partial derivative, we have that
\begin{align*}
	\frac{\partial p}{\partial y_i}(x,\ov y)&=x\cdot \frac{\partial p_0}{\partial y_i}+\sum_{j=1}^k k_j\alpha_jy^{k_j-1}_i\,.
\end{align*}
For the stream partial derivative, let us  denote with $\ccc^{h}_{y_i}$ the quantity $\sum_{j=0}^h y_{0i}^j y_i^{h-j}$, with $c^0_{y_i} :=1$ and $c^{-1}_{y_i} := 0$.
Taking into account the rules for $\dh$ and writing $m(\ov u)$ for the evaluation of a monomial $m$ at $\ov y=\ov u$, we have
\begin{align}
	\label{der}
	\frac{\dh p}{\dh y_i}(x,\ov y_0,\ov y)&=\sum_{j=1}^k  \beta_j(\ov y_0)(\ccc^{k_j-1}_{y_i})\gamma_j(\ov y_0)\,.
\end{align}
By denoting with $\ccc^{h}_{y_i}(r_1,r_2)$ the term $\ccc^{h}_{y_i}$ with $r_1\ (\in \reals)$ in place of $y_{0i}$ and $r_2\ (\in \reals)$ in place of $y_i$, we have that
$\ccc^{h}_{y_i}(r,r)=(h+1)\, r^{h}$, for any $r \in \reals$. Upon evaluation of the above polynomials at $x=0$, $\ov y_0=\ov r_0$, $\ov y = \ov r_0$, we get
\begin{align*}
	\frac{\partial p}{\partial y_i}(0,\ov r_0)&=\sum_{j=1}^k k_j\alpha_j(\ov r_0){r_{0i}}^{k_j-1}\\
	\frac{\dh p}{\dh y_i}(0,\ov r_0,\ov r_0)&=\sum_{j=1}^k  \beta_j(\ov r_0)(\ccc^{k_j-1}_{y_i}(r_{0i},r_{0i}))\gamma_j(\ov r_0)\\
	&= \sum_{j=1}^k  \beta_j(\ov r_0) (k_j r_{0i}^{k_j-1})\gamma_j(\ov r_0)\,=\,\sum_{j=1}^k  k_j \alpha_j(\ov r_0)  r_{0i}^{k_j-1}\,.
	\tag*{\qedhere}
\end{align*}
\end{proof}

\newcommand{\A}{\mathcal{A}}
\newcommand{\Tay}{\mathcal{T}}
The above lemma implies that  the classical and stream Jacobian matrices evaluated at $x=0,\ov y=\ov r_0$ are the same: $(\nabla_{\ov y}\ \E)(0,\ov r_0)= (\nablah_{\ov y}\E)(0,\ov r_0,\ov r_0)$. In particular, the first is invertible if and only if the latter is invertible. Therefore, the classical and stream IFT can be applied exactly under the same hypotheses on $\E$ and $\ov r_0$. What is the relationship between the solutions provided by the two theorems? The next theorem precisely characterizes this relationship. In its statement and proof, we make use of the following concept. Consider the set $\A$ of  functions $\reals\rightarrow \reals$ that are real analytic around the origin, i.e., those functions    that admit    a Taylor expansion with a positive radius of convergence around $x=0$. It is well-known that $\A$   forms a $\reals$-algebra. Now consider the function $\Tay$ that sends each $f\in \A$ to the stream $\Tay[f]$ of its Taylor coefficients around 0,   that is $\Tay[f](j)=f^{(j)}(0)/j!$ for each $j\geq 0$.
It is easy to check that $\Tay$ acts as a $\reals$-algebra homomorphism from $\A$ to $\Str$; in particular, by denoting with ‘$\,\cdot\,$' the (pointwise) product of functions, we have that $\Tay[f\cdot g]=\Tay[f]
\convo \Tay[g]$.

\begin{thm}[stream IFT, classical version]\label{th:class}
Let $\ov r_0\in \reals^n$ be such that $\E(0,\ov r_0)=0$ and  $(\nabla_{\ov y}\ \E)(0,\ov r_0)$ is invertible as a matrix in $\reals^{n\times n}$. Then there is a unique stream solution $\ov\sigma$ of $\E$   such that $\ov\sigma(0)=\ov r_0$. In particular, $\ov\sigma=\Tay[f]$, for  $f:\reals\rightarrow \reals^n$ a real analytic function at the origin, which is the unique solution around the origin of 
the following  system of $n$ rational \ode s and initial conditions:
\begin{align}\label{eq:jacobODE}
	\frac{\mathrm{d} }{\mathrm{d} x}f(x)&=-(\nabla_{\ov y}\ \E)(x,f(x))^{-1}\cdot \left(\frac{\partial \E }{\partial x}(x,f(x))\right)^T &&f(0)=\ov r_0 \,.
\end{align}
\end{thm}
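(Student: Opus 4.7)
The plan is to compose the classical IFT with Theorem \ref{th:ift} by way of the Taylor-coefficient map $\Tay$, using Lemma \ref{lemma:equiv} as the bridge between the two non-singularity hypotheses.

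First, I would invoke Lemma \ref{lemma:equiv} to observe that $(\nabla_{\ov y}\E)(0,\ov r_0) = (\nablah_{\ov y}\E)(0,\ov r_0,\ov r_0)$ as matrices in $\reals^{n\times n}$, so the hypotheses of Theorem \ref{th:ift} are satisfied and there is indeed a unique stream solution $\ov\sigma$ with $\ov\sigma(0)=\ov r_0$. Independently, the hypotheses of the classical IFT \cite[Th.9.28]{Rudin} are also satisfied, so there exists a unique real analytic function $f$, defined in a neighborhood $U$ of $0$, with $f(0)=\ov r_0$ and $\E(x,f(x))\equiv 0$ on $U$. Differentiating this identity componentwise with respect to $x$ via the classical chain rule yields
\begin{equation*}
	\frac{\partial \E}{\partial x}(x,f(x)) + (\nabla_{\ov y}\E)(x,f(x))\cdot \left(\frac{\mathrm{d} f}{\mathrm{d} x}\right)^{T} = 0,
\end{equation*}
and inverting the (locally invertible) Jacobian delivers the \ode\ system \eqref{eq:jacobODE}. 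So it only remains to prove that $\ov\sigma = \Tay[f]$.

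For this, I would show that $\Tay[f]$ is itself a stream solution of $\E$ starting at $\ov r_0$, so that the claim will follow from the uniqueness clause of Theorem \ref{th:ift}. The key fact is that $\Tay : \A \rightarrow \Str$ is an $\reals$-algebra homomorphism (componentwise on tuples): it clearly preserves $\reals$-linear combinations, sends the identity function $x\mapsto x$ to $X$, and satisfies $\Tay[f\cdot g]=\Tay[f]\convo \Tay[g]$ by the classical Cauchy product formula for Taylor coefficients of an ordinary product of analytic functions. Applying this homomorphism to the polynomial identity $\E(x,f(x))=0$ (valid in $\A$ since each coordinate of $f$ lies in $\A$ and $\A$ is an algebra) yields $\E(X,\Tay[f]) = \Tay[\E(x,f(x))] = \Tay[0] = 0$, while $\Tay[f](0)=f(0)=\ov r_0$. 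Hence $\Tay[f]$ is a stream solution of $\E$ with the prescribed initial value and must coincide with $\ov\sigma$.

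The only delicate step is the verification that $\Tay$ really is an $\reals$-algebra homomorphism on $\A$ and, in particular, that the multiplicative law interchanges the pointwise product of analytic functions with convolution of streams. This is the standard Cauchy/convolution identity for Taylor coefficients and is responsible for the whole coincidence between the two IFTs; once it is in place, the rest of the argument is just the commutative diagram given by applying $\Tay$ to $\E(x,f(x))=0$ and appealing to uniqueness. No convergence considerations intrude, because we only need $f$ to be defined on \emph{some} neighborhood of $0$ in order to extract its Taylor coefficients.
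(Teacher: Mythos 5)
Your proof is correct and follows essentially the same route as the paper's: apply the classical IFT, use the fact that $\Tay$ is an $\reals$-algebra homomorphism to transport the identity $\E(x,f(x))=0$ to $\E(X,\Tay[f])=0$, and then invoke Lemma \ref{lemma:equiv} together with the uniqueness clause of Theorem \ref{th:ift} to conclude $\ov\sigma=\Tay[f]$. The only cosmetic difference is that you spell out the chain-rule derivation of the ODE system, which the paper leaves to the cited statement of the classical IFT.
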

\begin{proof}
Under the conditions on $\E$ and $\ov r_0$ stated in the hypotheses,
the classical IFT implies the existence of a unique real analytic function $f:\reals\rightarrow \reals^n$, say $f=(f_1,...,f_n)$, such that $f(0)=\ov r_0$ and $\E(x,f(x))$ is identically 0. Moreover, 
it tells us that $f$ satisfies the system of (polynomial) nonlinear \ode s and initial conditions in \eqref{eq:jacobODE}.
Note that $\det( (\nabla_{\ov y}\ \E)(0,\ov r_0))\neq 0$ and the continuity of $\det(\E(x,f(x)))$ around the origin, guaranteed by the IFT \cite[Th.9.28]{Rudin}, in turn guarantee  that  $(\nabla_{\ov y}\ \E)(x, f(x))$ is nonsingular in a neighborhood of $x=0$.
Let $\ov\sigma=(\sigma_1,...,\sigma_n)$ be the stream of the coefficients of the Taylor series of  $f$  expanded at $x=0$, taken componentwise: $\ov\sigma=\Tay[f]:=(\Tay[f_1],...,\Tay[f_n])$.
Now $\ov\sigma$ is a stream solution of $\E$, as a    consequence of the fact that $\Tay$ is a  $\reals$-algebra homomorphism between $\A$ and $\Str$: indeed, for each $p(x,\ov y)\in \E$, $0=p(x,f(x))$ implies $(0,0,...)=\Tay[0]=\Tay[p(x,f(x))]= p(\Tay[x],\Tay[f])=p(X,\ov\sigma)$.   Uniqueness of $\ov\sigma$ is guaranteed by Theorem \ref{th:ift}, because $(\nablah_{\ov y}\E)(0,\ov r_0)=(\nabla_{\ov y}\ \E)(0,\ov r_0,\ov r_0)$ (see Lemma \ref{lemma:equiv}) and it is invertible by hypothesis.
\end{proof}

A corollary of the above theorem is that one can obtain  the unique stream solution  of $\E$ also by computing the Taylor coefficients of the solution $f$ of  \eqref{eq:jacobODE}. Such coefficients can be computed without having to explicitly solve the system of \ode s. We will elaborate on this point in Section \ref{sec:computational},
\revOne{
where, we will compare in terms of efficiency the method based on \sde s with the method based on \ode s, on   two   nontrivial polynomial systems. Here, we just consider the \ode s method on a simple example.
}

\begin{exa}\label{ex:catalan_class}
Consider again the polynomial system $\E=\{y-(1+xy^2)\}$ in the single variable $y=y_1$, with the initial condition $r_0=1$, seen in Example \ref{ex:guarded}. Since $(\nabla_y\ \E)(x,y)= 1-2xy$ is nonzero at $(0,r_0)$, we can apply Theorem \ref{th:class}. The \ode\  and initial condition in \eqref{eq:jacobODE} in this case are, letting $f=y$: $\frac{\mathrm{d} }{\mathrm{d} x}y(x)=  \frac{y^{2}}{1-2 x y }$ and $y(0)=1$.
This system can be solved explicitly. Alternatively, one can compute the coefficients of the Taylor expansion of the solution, e.g. by successive differentiation: $y(x)=\sum_{j\geq 0}\frac{y^{(j)}(0)}{j!}x^j=1+1x +2x^2 + 5x^3 + 14x^4 +42x^5+\cdots$. Such coefficients form again the stream $\sigma$ of Catalan numbers that is therefore the unique stream solution of $\E$ with $\sigma(0)=r_0=1$.
\end{exa}

\section{An extended example: three-coloured trees}
\label{sec:casestudy}
We consider a polynomial system  $\E$ implicitly defining the generating functions of    `three-coloured trees',       Example 14 in \cite[Sect.4]{flajo}. For each of the three considered colours (variables), \cite[Sect.4]{flajo} shows how to reduce $\E$  to a single nontrivial equation. This implies algebraicity of the series implicitly defined by $\E$: the  reduction is conducted using results from elimination theory \cite{Cox}. 
Here we will  show how to directly transform $\E$ into  a system of polynomial \sde s and initial conditions, $(\Eqs,\rho)$, as implied by the stream IFT (Theorem \ref{th:ift}). As the   coefficients in $\E$ are rational, reduction to \sde s directly implies algebraicity (Remark \ref{rem:algebraic}), besides giving a method of calculating the streams coefficients.   We will also consider   reduction of $\E$ to a system of polynomial \ode s, as implied by the classical version of the IFT (Theorem \ref{th:class}), and compare the obtained \sde\ and \ode\ systems.

Three-coloured trees are binary trees (plane and rooted) with nodes coloured by any of three colours, $a,b,c$, such that any two adjacent nodes have different colours and external nodes are coloured by the $a$-colour.
Let $\An, \B, \Cn$
denote the sets of three-coloured trees with root of the $a,b,c$ color  respectively, and $A,B,C$ the corresponding ordinary generating functions: the $n$-th coefficient of $A$ is the number of  trees with $a$-coloured root and $n$ external nodes; similarly for $B$ and $C$. 
Below, we 
report from \cite[Sect.4,eq.(40)]{flajo} the 
polynomial system    $\E$; to adhere to the notation of Section \ref{sec:background}, we  have replaced the variables $(A,B,C)$ with $\ov y=(y_1,y_2,y_3)$.{}\footnote{We note that  there is a slip in the first equation  appearing in  \cite{flajo}, by which the term $(B+C)^2=(y_2+y_3)^2$ appears with the wrong sign. The correct sign is used here.}
\ifmai
\begin{equation}
\E:\;
\begin{cases}
	\An=&\{\bullet\}+(\B+\Cn)^2\\
	\B=&(\Cn+\An)^2\\
	\Cn=&(\An+\B)^2
\end{cases}	
\Longrightarrow
\begin{cases}
	A-z-(B+C)^2=0\\
	B-(C+A)^2=0\\
	C-(A+B)^2=0
\end{cases}	
\end{equation}
\fi
\begin{equation}\E:\;
	\label{eq:trees}
	\begin{cases}
		y_1-x-(y_2+y_3)^2&=0\\
		y_2-(y_3+y_1)^2&=0\\
		y_3-(y_1+y_2)^2&=0\,.
	\end{cases}	
\end{equation}

\revOne{
	System \eqref{eq:trees} has been derived via the symbolic method  \cite{flajo}, a powerful technique to translate formal definitions of combinatorial objects into equations on generating functions to count those objects.
	For instance,  consider a three-coloured tree   with an $a$-coloured root. It can either be  single node, accounted by $x$ in the first equation, or a root with two subtrees, each with root either of $b$- or of $c$-colour.
	Considering this structure, system  \eqref{eq:trees} can be readily deduced.}

Since the number of external nodes of any empty tree is $0$, we set $\ov r_0=(0,0,0)$. It is immediate to verify that $\E(0, \ov r_0)=0$ and $(\nablah_{\ov y}\E)(0,\ov r_0,\ov r_0) =   \left[\begin{smallmatrix}	1 & 0 &0  \\
	0& 1 & 0  \\
	0& 0 & 1
\end{smallmatrix}\right]=I$, that is obviously invertible, hence Theorem \ref{th:ift} holds, and we generate system \eqref{eq:jacobSDE} in Theorem \ref{th:ift}. In particular, after applying Lemma \ref{lemma:rational}, we get the following polynomial system of \sde s and initial conditions:
{\small \begin{align}
		\label{eq:SDE0}
		\begin{cases}
			y_1'\ =  2wy_1y_2+wy_2y_3-w &  y_1(0)=   0\\
			y_2'\ = -2wy_1^2-4wy_1y_2-w y_1y_3-wy_1-2wy_2y_3 & y_2(0)=   0\\
			y_3'\ = -wy_1y_2-wy_1 - 2wy_2 & y_2(0)=   0\\
			w'\ = 4w^2y_1^2y_2^2+4w^2y_1^2y_2y_3-8w^2y_1^2y_3^2-6w^2y_1^2y_3 + 8 w^2 y_1y_2^3+ & w(0)= -1\\
			\;\;\;\;\;\;\;\;\;\;14w^2y_1y_2^2  y_3 + 6 w^2y_1y_2^2 - 10 w^2y_1y_2y_3^2-8w^2y_1y_2y_3-2w^2\\
			\;\;\;\;\;\;\;\;\;\;y_1 y_2-4w^2y_1y_3^3-7w^2y_1y_3^2 - 7w^2y_1y_3+4w^2y_2^3y_3+6w^2y_2^2\\
			\;\;\;\;\;\;\;\;\;\;y_3^2+3w^2y_2^2y_3-4w^2y_2^2-6w^2 y_2y_3^3-3w^2y_2y_3^2-10w^2y_2y_3-\\
			\;\;\;\;\;\;\;\;\;\;3w^2y_2- 2w^2y_3^2-3w^2y_3\,.
		\end{cases}
	\end{align}
}\noindent
See Appendix \ref{app:details_trees} for details of the derivation. By Theorem \ref{th:ift}, the original polynomial system $\E$ in \eqref{eq:trees} has a unique stream solution $\ov\sigma$ such that $\ov \sigma(0)=\ov r_0$,  and $(\ov{\sigma},\tau)=(\sigma_1,\sigma_2,\sigma_3,\tau)$, for a suitable $\tau$,  is the unique stream solution of \eqref{eq:SDE0}. In particular, we have: $ \sigma_1=(0, 1, 0, 0, 4, 16, 56, 256, 1236,...)$. This matches the generating function  expansion for   $ y_1$  in Example 14 of \cite{flajo}: $ g_1(z)=z+4z^4+16z^5+56z^6+256z^7+1236z^8+...$.

On the other hand, applying the classic IFT (Theorem \ref{th:class}) to system \eqref{eq:trees},  there is a unique real analytic solution $f(x)=\ov y(x)=(y_1(x),y_2(x),y_3(x))$ of the \ode\ initial value problem \eqref{eq:jacobODE} such that $\ov y(0)=\ov r_0=(0,0,0)$. The system in question can be computed starting from the classical Jacobian of $\E$,
$
(\nabla_{\ov y}\ \E)(x,\ov y)= \left[
\begin{smallmatrix} 1&-2 y_2+-2 y_3&-2 y_3-2 y_2\\
	-2 y_1-2 y_3&1&-2 y_3-2 y_1\\
	-2 y_1-2 y_2&-2 y_2-2 y_1&1
\end{smallmatrix}\right].
$
\ifmai
Denoting the determinant of $\nabla_{\ov y}\ \E$ as
{\small
	\begin{align*}
		\tilde{\Delta}:=&-16 y_{1}^2 y_{2}-16 y_{1}^2 y_{3}-4 y_1^2-16 y_1 y_2^2-32 y_1 y_2 y_3-12 y_1 y_2-16 y_1 y_3^2-12 y_1 y_3-16y_2^2\\&y_3-4y_2^2-16 y_2 y_3^2-12 y_2 y_3-4 y_3^3+1,
	\end{align*}
}\noindent

the inverse of the Jacobian matrix $(\nabla_{\ov y}\ \E)(x,\ov y)$ is
$$
\nabla_{\ov y}\ \E^{-1}= {\tilde{\Delta}^{-1}}\cdot
\begin{bmatrix}\substack{4 y_1^2 + 4 y_1 y_2 + 4 y_1 y_3\\ + 4 y_2 y_3 - 1}&\substack{-4 y_1 y_2-4 y_1 y_3-4 y_2^2\\-4 y_2 y_3-2 y_2-2 y_3}&\substack{-4 y_1 y_2-4 y_1 y_3-4 y_2 y_3\\-2 y_2-4 y_3^2-2 y_3}\\ \\
	\substack{-4 y_1^2 - 4 y_1 y_3 - 4 y_1 y_2\\ - 2 y_1 - 4 y_2 y_3 - 2 y_3}&\substack{4 y_1 y_2+4 y_1 y_3+4 y_2^2\\+4 y_2 y_3-1}&\substack{-4 y_1 y_2-4 y_1 y_3-2 y_1\\-4 y_2 y_3-4 y_3^2-2 y_3}\\ \\
	\substack{-4 y_1^2 - 4 y_1 y_2 - 4 y_1 y_3\\ - 2 y_1 - 4 y_2 y_3 - 2 y_2}&\substack{-4 y_1 y_2-4 y_1 y_3-2 y_1-4 y_2^2\\-4 y_2 y_3-2 y_2}&\substack{4 y_1 y_2+4 y_1 y_3+4 y_2 y_3\\+4 y_3^2-1}
	
\end{bmatrix}.
$$
\fi

Since $\frac{\partial \E }{\partial x}(x,\ov y)_{|x=0,\ov y=\ov r_0}=(-1,0,0)$, \eqref{eq:jacobODE} yields  the following system of rational \ode s and initial conditions:
{
	\begin{align}
		\label{eq:ODE}
		\frac{\mathrm{d}}{\mathrm{d} x}\ov y(x)=-(\nabla_{\ov y}\ \E)^{-1}\cdot \left(\frac{\partial \E }{\partial x}\right)^T= {\tilde{\Delta}^{-1}}\cdot\left[\begin{smallmatrix} \substack{4 y_1^2 + 4 y_1 y_2 + 4 y_1 y_3 + 4 y_2 y_3 - 1} \\[2pt]  \substack{-4 y_1^2 - 4 y_1 y_3 - 4 y_1 y_2 - 2 y_1 - 4 y_2 y_3 - 2 y_3} \\[2pt] \substack{-4 y_1^2 - 4 y_1 y_2 - 4 y_1 y_3 - 2 y_1 - 4 y_2 y_3 - 2 y_2} \end{smallmatrix}\right] && \ov y(0)=\ov r_0.
	\end{align}
	where $\tilde{\Delta}:=-16 y_{1}^2 y_{2}-16 y_{1}^2 y_{3}-4 y_1^2-16 y_1 y_2^2-32 y_1 y_2 y_3-12 y_1 y_2-16 y_1 y_3^2-12 y_1 y_3-16y_2^2 y_3-4y_2^2-16 y_2 y_3^2-12 y_2 y_3-4 y_3^3+1$ is the determinant of $\nabla_{\ov y}\ \E$.
	
	Considering a series solution of the system,   we obtain, for the first component of the solution $f$:
	\begin{equation*}
		y_1(x)=x+4x^4+16x^5+56x^6+256x^7+1236x^8+5808x^9+O(x^{10})
	\end{equation*}
	whose coefficients match those of $\sigma_1$ for \eqref{eq:SDE0}. 

	\section{Classical vs. stream IFT: computational aspects}
	\label{sec:computational}
	
	\ifmai
	We focus   on   methods to generate  the streams  implicitly defined by  a polynomial system $\E$ of $n$ nonzero polynomials in $n$ variables $\ov y=(y_1,...,y_n)$, and an initial condition $\ov r_0\in \K^n$ such that $\E(0,\ov r_0)=0$. For the sake of concreteness, in this section we fix $\K=\C$. When $n=1$, we fall in the case of algebraic functions, see Remark \ref{rem:algebraic}. In this case, there exist efficient methods to build a recurrence relation  that generates the stream coefficients, such that the first $k$  stream coefficients can be computed in   $O(k)$ arithmetic operations. Such recurrences   can be automatically generated starting from the polynomial $p(x,y)$ implicitly defining the sequence. As an example, a       recurrence relation for the stream of the Catalan numbers, defined by the \sde\ $\sigma'=\sigma^2$ with $\sigma(0)=1$, is the following: $\sigma(0)=1$, $\sigma(k+1)=\frac{2(2k+1)}{k+2}\sigma(k)$.   These methods are surveyed in e.g. \cite[Ch.4]{flajo}.
	
	When $n>1$, the situation is less  definite. As discussed in Remark \ref{rem:algebraic}, if the polynomial system $\E$ is  {zero-dimensional} when the polynomials are considered with coefficients in $\C(x)$, the above remarks still apply with some modifications. In principle,  it is possible to use elimination theory, in the form of Groebner bases or resultants, to find for each variable $y_i$ a nonzero polynomial $p(x,y_i)$ which lies in the ideal generated by $\E$ and is annihilated by $\sigma_i$ \cite[Ch.4, Th.8.7]{flajo}. However, this process can be computationally very demanding, as building a Groebner basis of $\E$ has a doubly exponential worst case complexity  --- approximately $\Omega(d^{2^n})$, where $d$ is the maximum degree in $\E$. Most important, the situation of zero-dimensionality is rather \emph{exceptional} for  polynomial systems. Indeed, zero-dimensionality of $\E$ is equivalent to asking that  in any   Groebner basis of $\E$, for each variable $y_i$ there is a polynomial  whose leading monomial is of the form $y_i^{k_i}$  for some $k_i\geq 0$ \cite[Ch.5, Sect.3, Th.6]{Cox}. As an example, the polynomial system $\E$ in \eqref{eq:nonzerod} is not zero-dimensional, something that can be readily seen by computing any Groebner basis of it. Yet, the stream solution of $\E$ with the initial condition $(1,1,1)$ \emph{is} algebraic, as argued in Remark \ref{rem:algebraic}.
	
	In cases where applying elimination theory is not efficient  or not possible, one can compute the stream solution   via the \sde\ initial value problem \eqref{eq:jacobSDE} delivered by the   IFT Theorem \ref{th:ift}, provided of course that $(\nablah_{\ov y} \E)(0,\ov r_0,\ov r_0)$ is nonsingular. In full generality,
	\fi
	We   compare the stream and the classical versions of the IFT from a computational point of view.
	First, we   discuss  how the recurrence \eqref{eq:recurrence} can be effectively implemented for \emph{any} polynomial \sde\ initial value problem of the form \eqref{eq:ivp}, not necessarily arising from an application of Theorem \ref{th:ift}.
	The basic idea is to always reduce   products involving more than two factors to binary products, for which the   convolution formula \eqref{eq:sumconv} can be applied. In order to perform  this reduction systematically, let us consider the set $T$ of  all subterms $t=t(x,\ov y)$ that occur   in  the polynomials $p_i$ in $\Eqs$.  We assume that $T$  also includes   all the constants appearing in $\Eqs$, the constant 1, and    all the variables $y_0\,(:=x),y_1,....,y_n$.
	For each  term $t$ in $T$,  a stream $\sigma_t$  is   introduced   via the following recurrence relation that defines   $\sigma_t(k)$. Formally, the definition goes by lexicographic induction on  $(k,t)$, with the second elements ordered according the “\emph{subterm of}'' relation.  For the sake of notation,  below we let $p_0=1$, and let the case $t=c\cdot t_1$  for $c\in \K$  be subsumed by the last clause, where $c$ is treated as the constant stream $(c,0,0,...)$. Finally, $k>0$.
	\begin{equation}\label{eq:reconv}
		\begin{array}{rcll}
			\sigma_t(0)&=&t(0,\ov r_0) \\
			\sigma_{y_i}(k)&=&\sigma_{p_i}(k-1)  & \text{($i=0,...,n$)}\\
			\sigma_{c}(k)&=&0  &  (c\in \K)\\
			\sigma_{t_1+t_2}(k)&= &\sigma_{t_1}(k)+\sigma_{t_2}(k) \\
			\sigma_{t_1\cdot t_2}(k)&= &\sum_{j=0}^{k}\sigma_{t_1}(j)\cdot\sigma_{t_2}(k-j)\,.
		\end{array}
	\end{equation}
	\revOne{
		This   algorithm  for turning a system of \sde s into a system of recurrence relations can be considered as folklore. It has been applied in e.g.    \cite[Sect.10]{HKR17},  to the \sde\  for the Fibonacci numbers, which is linear. Here we explicitly describe it for the general case of polynomial \sde s.
	}
	Its correctness, as stated by the next lemma, is obvious.
	
	\begin{lem}\label{lemma:algo} Let $\ov\sigma=(\sigma_1,...,\sigma_n)$ be the unique stream solution of a problem $(\Eqs,\rho)$ of the form \eqref{eq:ivp}. With the above definition of $\sigma_t$, we have $\sigma_i=\sigma_{y_i}$, for $i=1,...,n$.
	\end{lem}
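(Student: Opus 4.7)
The plan is to establish, by induction, the more uniform statement that $\sigma_t(k)=t(X,\ov\sigma)(k)$ for every $t\in T$ and every $k\geq 0$, where $t(X,\ov\sigma)$ denotes the stream obtained by interpreting $t$ in $\Sigma$ under the assignment $x\mapsto X$, $y_i\mapsto \sigma_i$. The desired conclusion $\sigma_i=\sigma_{y_i}$ then follows by taking $t=y_i$, since $y_i(X,\ov\sigma)=\sigma_i$.

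First I would fix the induction: a lexicographic induction on the pair $(k,t)$, with $k$ compared by $<$ on $\N$ and $t$ by the strict subterm relation. This matches the recursion of \eqref{eq:reconv}, in which $\sigma_{y_i}(k)$ is reduced to $\sigma_{p_i}(k-1)$ (strictly smaller first coordinate, although $p_i$ need not be a subterm of $y_i$), while the sum and product cases strictly decrease the second coordinate at the same $k$.

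For the base case $k=0$ I would note that $\sigma_t(0)=t(0,\ov r_0)$ by definition, while $t(X,\ov\sigma)(0)=t(0,\ov\sigma(0))=t(0,\ov r_0)$ because the $0$-th coefficient of any polynomial combination of streams depends only on their $0$-th coefficients (use \eqref{eq:sumconv}). For the inductive step with $k>0$ I would split on the outermost syntactic form of $t$:
\begin{itemize}
\item If $t=c\in\K$, then $c(X,\ov\sigma)=(c,0,0,\dots)$, whose $k$-th coefficient is $0=\sigma_c(k)$.
\item If $t=y_i$ (including the case $y_0=x$, with the convention $p_0=1$ so that $x'=1$), then $\sigma_{y_i}(k)=\sigma_{p_i}(k-1)=p_i(X,\ov\sigma)(k-1)$ by the inductive hypothesis applied to $(k-1,p_i)$; since $\sigma_i$ solves $(\Eqs,\rho)$ we have $\sigma'_i=p_i(X,\ov\sigma)$, hence this equals $\sigma'_i(k-1)=\sigma_i(k)=y_i(X,\ov\sigma)(k)$.
\item If $t=t_1+t_2$, apply the inductive hypothesis at $(k,t_1)$ and $(k,t_2)$ together with the pointwise definition of stream sum in \eqref{eq:sumconv}.
\item If $t=t_1\cdot t_2$, apply the inductive hypothesis at $(k,t_1)$ and $(k,t_2)$ together with the convolution formula in \eqref{eq:sumconv}; the scalar case $c\cdot t_1$ is subsumed by this clause, as already noted in the text.
\end{itemize}

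There is no real obstacle here: the only subtle point is the well-foundedness of the chosen ordering. In the $y_i$ case the second coordinate of the recursion argument moves from $y_i$ to $p_i$, which may be a larger term, but the first coordinate strictly decreases from $k$ to $k-1$, so the lexicographic measure still drops. Once this is observed, every clause of \eqref{eq:reconv} becomes a direct syntactic transcription of the corresponding semantic identity in $\Sigma$, and the lemma follows.
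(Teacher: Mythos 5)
Your proof is correct and is exactly the natural, straightforward argument that the paper chooses not to spell out (the paper simply declares the lemma's correctness ``obvious'' and gives no proof). Your strengthening to the invariant $\sigma_t(k)=t(X,\ov\sigma)(k)$, the lexicographic induction on $(k,t)$ matching the well-founded recursion already used to define the $\sigma_t$'s, and the observation that the $y_i$-clause drops the first coordinate while the sum/product clauses drop the second, are all in order; the base case at $k=0$ is handled correctly via causality of the operations in \eqref{eq:sumconv}.
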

	
	In a practical implementation of this scheme, one can avoid recurring over the structure of   $t$, as follows.
	At the $k$-th iteration ($k>0$), the values $\sigma_{t}(k)$ are computed and stored by examining  the terms $t\in T$ according to a total order on $T$   compatible with the “\emph{subterm of}'' relation. In this way, whenever  either of the  last two clauses is applied, one can   access the required values   $\sigma_{t_1}(j),\sigma_{t_2}(j)$ up to $j=k$  already computed and stored away in the current iteration. The computation of the $k$-th coefficient $\ov\sigma(k)$, given the previous ones, requires therefore $O(P  k+S)$ multiplications and additions, where $P$ and $S$ are the number of overall occurrences in $T$ of the product and sum operators, respectively. Overall, this means $O(Pk^2+Sk)$ operations for the first $k$ coefficients. This complexity is minimized by choosing a format of   polynomial expressions that minimizes $P$: for example, a Horner scheme (note that Horner schemes exist also for   multivariate polynomials). Memory occupation grows linearly as $O(k(P+S))$.
	
	Another method to generate the coefficients of the  stream solution is applying the classical version of the IFT (Theorem \ref{th:class}), and rely on the \ode\ initial value problem in \eqref{eq:jacobODE}. However, this choice appears to be computationally less convenient. Indeed, apart from the rare cases where \eqref{eq:jacobODE}   can be solved explicitly, one must obtain the coefficients of the solution  by expanding it as a power series --- indeed its Taylor series. Once the rational system \eqref{eq:jacobODE} is reduced to a polynomial form, which is always possible by introducing one extra variable, the coefficients of this power series   can be computed by a recurrence relation similar to that discussed in Lemma \ref{lemma:algo} for \eqref{eq:recurrence}. The catch is that  the  size of the resulting set of terms $T$ is \emph{significantly larger}  for the  \ode\ system  \eqref{eq:jacobODE} than it is for the \sde\ system \eqref{eq:jacobSDE}.
	%
	To understand why, consider that, under the given hypotheses, the \sde\ system  in \eqref{eq:jacobSDE} is equivalent to  $\E'=0$, while the \ode\  system in \eqref{eq:jacobODE} is equivalent to $\frac{\mathrm{d}}{\mathrm{d}x}\E=0$.  Now, the    terms appearing in  $\E'$   are approximately \emph{half the size} of those appearing in $\frac{\mathrm{d}}{\mathrm{d}x}\E$. This is evident  already when comparing with one another   the stream and the ordinary   derivatives of  a bivariate polynomial $p(x,  y)=q_m(  y)x^m+\cdots+q_1(  y)x+ q_0(  y)$:
	\begin{align*}
		p(x,y)'&=\;q_m(  y)x^{m-1}+\cdots+q_1(  y)+  (q_0(  y))'\\
		\frac{\mathrm{d}}{\mathrm{d}x}p(x,y)&=\left(q_m(  y)m x^{m-1}+x^m\frac{\mathrm{d}}{\mathrm{d}x}q_m(  y)\right)+\cdots+\left(q_1(  y)+ x\frac{\mathrm{d}}{\mathrm{d}x} q_1(  y)\right)+\frac{\mathrm{d}}{\mathrm{d}x} q_0(  y)\,.
	\end{align*}
	
	A small experiment conducted with two different systems of polynomials, the  three-coloured trees \eqref{eq:trees} and the one-dimensional system \eqref{eq:nonzerod},  is in agreement with these qualitative considerations. For each of these systems, we have computed a few hundreds  coefficients of the    solution,  using both the methods in turn:  \sde s   via the recurrence relation of Lemma \ref{lemma:algo}  (Theorem \ref{th:ift}), and \ode s via a power series solution (Theorem \ref{th:class}). In the second case, we have used Maple's \textsf{dsolve} command with the \textsf{series} option\footnote{Python and Maple code for this example available at \href{https://github.com/Luisa-unifi/IFT}{https://github.com/Luisa-unifi/IFT}}.  For both systems, we plot in Figure \ref{fig:times} the execution time  as a function of the number of computed coefficients.
	
	\begin{figure}[]
		\centering
		\includegraphics[width = 200pt]{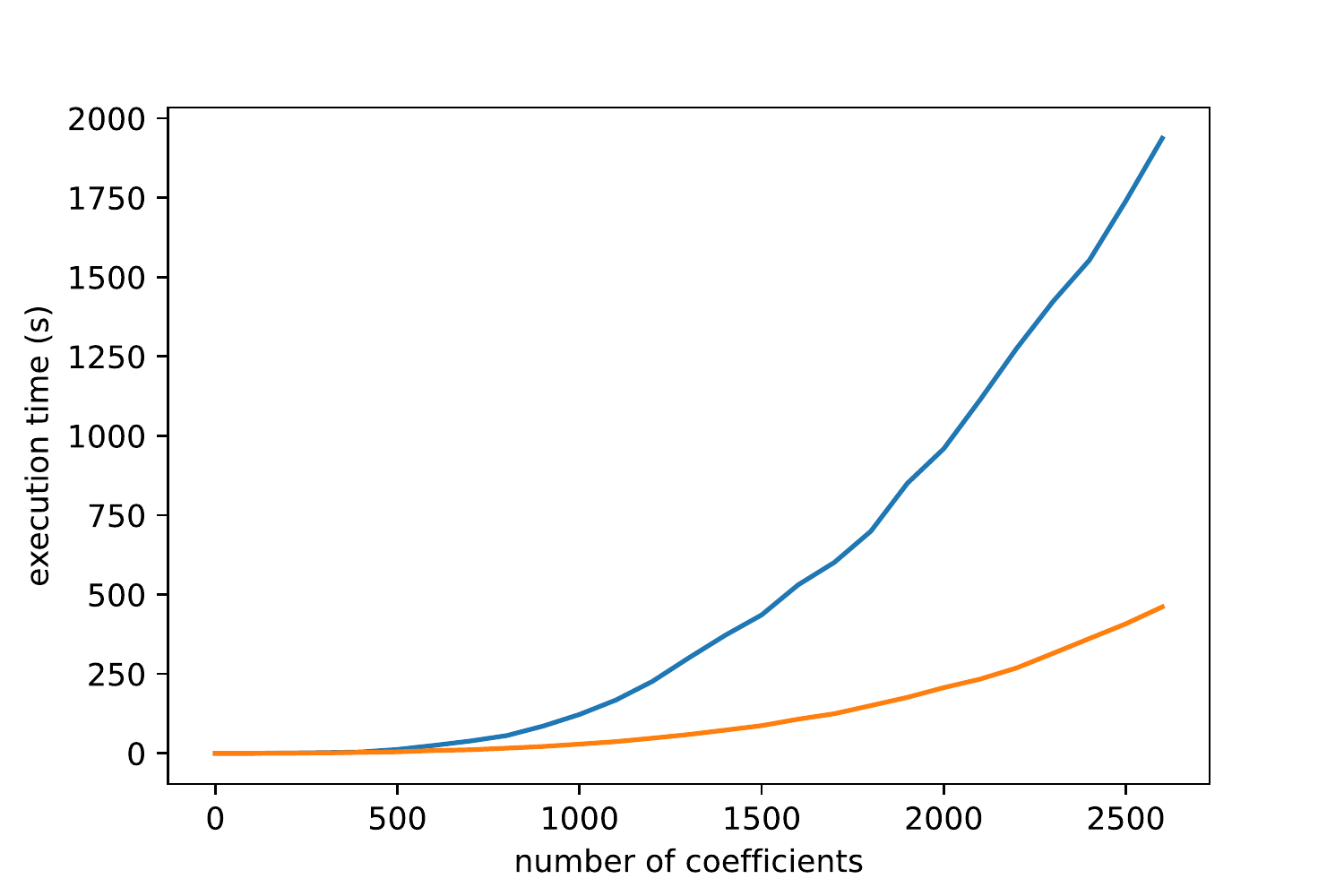}
		\includegraphics[width = 200pt]{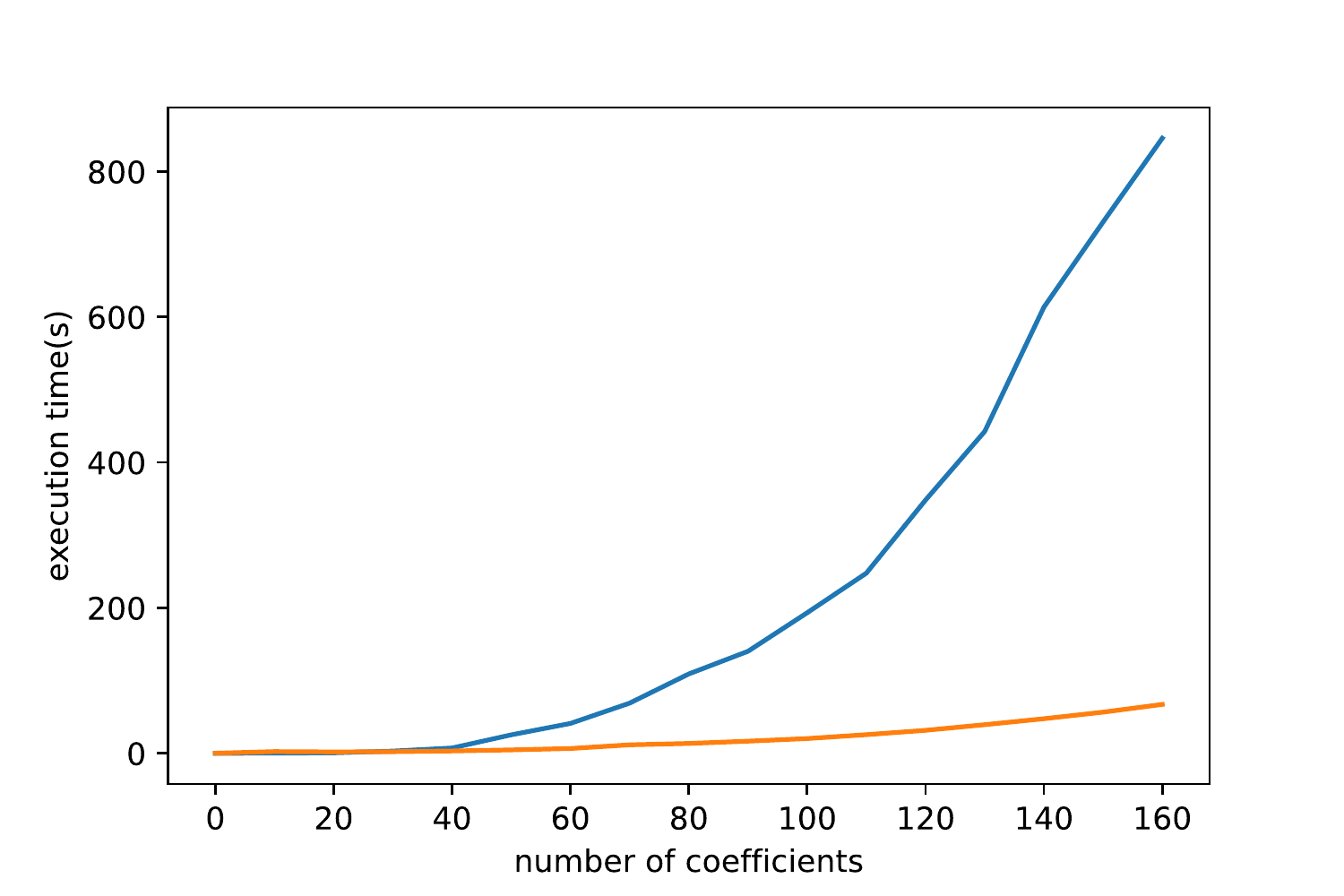}
		\caption{Execution time as a function of the number of computed coefficients  for the stream solution  of system \eqref{eq:trees} (left)  and   of system  \eqref{eq:nonzerod} (right).
			\revOne{The orange (lower) curve is the recurrence relation \eqref{eq:reconv} computed via Lemma \ref{lemma:algo} (authors' Python code); the blue (upper) curve is the power series solution of \eqref{eq:jacobODE} (Maple's \textsf{dsolve})}.
			\label{fig:times}
		}
	\end{figure}
	
	\begin{rem}[Newton method]\label{rem:newton}
		In terms of complexity with respect to   $k$ (number of computed coefficients), \emph{Newton iteration}  applied to formal power series \cite{Lipson,Schon,Galantai,Bostan} does   asymptotically better than the $O(k^2)$ algorithm outlined above.
		In particular, \cite[Th.3.12]{Bostan} shows that, under the same hypotheses of IFT,   the first $k$ coefficients of the solution of a system of algebraic equations can be computed by Newton iteration in time    $O(k\log k)$; on the downside, each iteration of Newton involves in general finding the solution of a $n\times n$ linear system.
	\end{rem}
	
	\ifmai
	\section{Shuffle product}\label{sec:extension}
	In principle, Theorem \ref{th:ift} can be extended to other forms of   product $\pi$ defined on streams, provided: (a) existence and uniqueness of the solution of the polynomial \sde\ problem \eqref{eq:ivp} is ensured; (b) a notion of inverse can be defined, so that the analog of Lemma \ref{lemma:rational} holds; (c)   the stream derivative of $\sigma\,\pi\,\tau$ can be expressed as a linear function of $\sigma'$ and $\tau'$, that is the analog of Lemma \ref{lemma:chainrule} holds.  These conditions hold true, in particular,  for the \emph{shuffle} product $\sigma \otimes\tau$, defined via   binomial convolution: for each $k\geq 0$
	\begin{align}\label{eq:binomial}
		(\sigma\otimes\tau)(k)&:=\sum_{j=0}^j {k \choose {k-j}} \sigma(j)\cdot \tau(k-j)\,.
	\end{align}
	Indeed, as to (a), see e.g. \cite{BG21}. As to (b), we have that the shuffle inverse of $\sigma$, denoted by $\sigma^{-\underline{1}}$, exists provided $\sigma(0)\neq 0$, and one has $(\sigma^{-\underline{1}})'=-\sigma'\otimes \sigma^{-\underline{1}}\otimes\sigma^{-\underline{1}}$, and $(\sigma^{-\underline{1}})(0)=\sigma(0)^{-1}$. And as to (c), we have $(\sigma\otimes \tau)'=\sigma'\otimes \tau+\sigma\otimes \tau'$.
	
	The formula for $(\sigma\otimes\tau)(k)$ in \eqref{eq:binomial} can be given several  interpretations. For instance, if $\sigma$ and $\tau$ express, for each $k$, the number of sequences of length $k$ of two types of objects, say two card decks of different types, then $(\sigma\otimes\tau)(k)$ is the number of sequences of length  $k$ that can be obtained by shuffling together sequences from the first and from the second deck, with the constraint  that  in the resulting deck the order of  each type  is preserved.  See \cite{flajo} for a detailed discussion of this operation. The streams of interest can be  defined implicitly, via a set of (possibly, polynomial) equations $\E$.
	As an example, the stream $\sigma$ counting binary \emph{labelled} rooted trees satisfies $\sigma -\sigma\otimes \sigma-x =0$; cf. \cite{flajo}.
	
	For a polynomial $p(x,\ov y)$, denote by $p_\pi(x,\ov\sigma)$ the stream obtained when evaluating $p$ with $\ov y=\ov\sigma$ and using the product $\pi\in \{\convo,\otimes\}$. We say that $\ov\sigma$ is a  stream $\pi$-solution of $\E$ if $p_\pi(x,\ov\sigma)=0$ for each $p\in \E$, for $\pi\in \{\convo,\otimes\}$. As a matter of fact,  there is direct correspondence between stream $\otimes$- and $\convo$-solutions of a polynomial system,  which makes an explicit extension  of the IFT to the shuffle product unnecessary --- at least from the point of view expressing the coefficients.

	\begin{lem}\label{lemma:shufflevsconv}
		Let $\E\subseteq \K[x,\ov y]$. Then $\ov\sigma$ is a $\otimes$-solution of $\E$ if and only, for some  $\convo$-solution $\ov\tau$ of $\E$, for each $i=1,...,n$  and $k\geq 0$: $\sigma_i(k)=\, k!\,\tau_i(k)$.
	\end{lem}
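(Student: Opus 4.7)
The plan is to construct an explicit $\K$-algebra isomorphism $\Phi$ between the convolution algebra $(\Sigma,+,\convo)$ and the shuffle algebra $(\Sigma,+,\otimes)$ that realizes the factorial-scaling described in the statement, and then transport $\convo$-solutions to $\otimes$-solutions along this isomorphism.

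Concretely, I would define $\Phi:\Sigma\to\Sigma$ by $\Phi(\sigma)(k):=k!\,\sigma(k)$. This is obviously $\K$-linear and bijective (since we work in characteristic $0$, every $k!$ is invertible in $\K$), and one checks immediately that it fixes constants, i.e.\ $\Phi(r)=r$ for $r\in\K$, and that $\Phi(X)=X$. The key computation is that $\Phi$ converts convolution into shuffle:
\begin{align*}
(\Phi(\sigma)\otimes\Phi(\tau))(k) \;=\; \sum_{j=0}^{k}\binom{k}{j}\,j!\,\sigma(j)\,(k-j)!\,\tau(k-j) \;=\; k!\sum_{j=0}^{k}\sigma(j)\,\tau(k-j) \;=\; \Phi(\sigma\convo\tau)(k).
\end{align*}
Hence $\Phi$ is an isomorphism of $\K$-algebras from $(\Sigma,+,\convo,0,1)$ to $(\Sigma,+,\otimes,0,1)$ that, moreover, sends $X$ to $X$.

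From this algebraic fact, the lemma follows by a one-line homomorphism argument. For any polynomial $p(x,\ov y)\in\K[x,\ov y]$ and any tuple $\ov\tau$, applying $\Phi$ componentwise and using that $\Phi$ is a $\K$-algebra homomorphism with $\Phi(X)=X$ gives $\Phi(p_\convo(X,\ov\tau)) = p_\otimes(X,\Phi(\ov\tau))$. Therefore $p_\convo(X,\ov\tau)=0$ if and only if $p_\otimes(X,\Phi(\ov\tau))=0$, since $\Phi(0)=0$ and $\Phi$ is injective. Taking this for every $p\in\E$ and writing $\sigma_i(k)=k!\,\tau_i(k)$ (which is precisely $\ov\sigma=\Phi(\ov\tau)$ componentwise) yields the ``if'' direction; for the ``only if'' direction, given a $\otimes$-solution $\ov\sigma$ just set $\ov\tau:=\Phi^{-1}(\ov\sigma)$, which satisfies $\tau_i(k)=\sigma_i(k)/k!$, and the same equivalence shows $\ov\tau$ is a $\convo$-solution.

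I do not expect any serious obstacle here: everything reduces to the classical identity $\binom{k}{j}j!(k-j)!=k!$, and the use of characteristic $0$ to invert factorials. The only mildly delicate point is making sure that $\Phi$ is compatible with the indeterminate $x$ that appears in the polynomials of $\E$: this is handled by the observation $\Phi(X)=X$, which is why the substitution-with-$X$ conventions $p_\convo(X,\ov\tau)$ and $p_\otimes(X,\Phi(\ov\tau))$ correspond correctly under $\Phi$.
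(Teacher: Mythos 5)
Your proof is correct and essentially the paper's own: your map $\Phi(\sigma)(k)=k!\,\sigma(k)$ is precisely Hadamard multiplication by the factorial stream $(0!,1!,2!,\dots)$, which is exactly the device the paper uses, and your key identity $(\Phi(\sigma)\otimes\Phi(\tau))=\Phi(\sigma\convo\tau)$ is the paper's $\fact^{-1}\had(\sigma\otimes\tau)=(\fact^{-1}\had\sigma)\convo(\fact^{-1}\had\tau)$ read in the inverse direction. The only cosmetic difference is that you package the scaling as a named $\K$-algebra isomorphism (and usefully make the implicit check $\Phi(X)=X$ explicit), whereas the paper works directly with the Hadamard-product notation and the extension of the identity to polynomial evaluation.
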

	
	As an example, the unique stream $\otimes$-solution $\sigma$ of $\E=\{y-(1+xy^2)\}$ is given by $\sigma(k)=k!c_k$, where $\tau=(c_0,c_1,...)$ is the stream of  Catalan numbers.
	In fact, with shuffle product we can go beyond algebraic polynomial equations: we can effectively find a recurrence relations for the solution  of {any } system of \emph{differential} polynomial \ode s, in terms of   binomial coefficients \eqref{eq:binomial}.  The content of the next lemma is the $\otimes$-solutions of \sde\ systems correspond, up to a factor $1/k!$ in the coefficients, precisely to solutions of the corresponding system of \ode s.  Below, we employ the subscript ‘$\otimes$'     to make it explicit in the notation that, when evaluating polynomials  on streams, product must be interpreted as shuffle.
	
	\begin{lem}\label{lemma:odes}
		Let $\mathcal{S}=\{\dot y_i = p_i(x,\ov y)\,:\, i=1,...,n\}$ be a system of polynomial \ode s in the variables $\ov y=(y_1,...,y_n)$, with initial conditions $\rho=\{\ov y(0)=\ov r_0\}$. Let $\ov\sigma$ be the unique stream $\otimes$-solution of the \sde\ i.v.p. $(\Eqs,\rho)$ where $\Eqs=  \{ y'_i = p_{i,\otimes}(x,\ov y)\,:\, i=1,...,n\}$. Then the unique solution $\ov y(x)$ of the i.v.p. $( \mathcal{S},\rho)$ is the power series in $x$: $y_i(x)=\sum_{j\geq 0} \frac{\sigma_i(j)}{j!} x^j$, for $i=1,...,n$.
	\end{lem}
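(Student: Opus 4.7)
The plan is to identify a $\K$-algebra homomorphism from the shuffle algebra $(\Str,+,\shu,0,1)$ to the ring of formal power series $\K[[x]]$ under ordinary addition and product, and use it to transfer the \sde\ for $\ov\sigma$ into the \ode\ for $\ov y(x)$. Concretely, define $\phi:\Str\to \K[[x]]$ by
\[
\phi(\sigma)\;:=\;\sum_{j\geq 0}\frac{\sigma(j)}{j!}x^j\,,
\]
the exponential generating function of $\sigma$. This map is visibly $\K$-linear and satisfies $\phi(X)=x$. Two additional properties are central: (i) $\phi$ commutes with derivatives, i.e.\ $\frac{\mathrm{d}}{\mathrm{d}x}\phi(\sigma)=\phi(\sigma')$, which is immediate by comparing coefficients; and (ii) $\phi$ sends the shuffle product to the ordinary product of power series, i.e.\ $\phi(\sigma\shu\tau)=\phi(\sigma)\cdot\phi(\tau)$.

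Property (ii) is the classical correspondence between binomial convolution and product of exponential generating functions. It follows by a short computation from \eqref{eq:binomial}, using $\frac{1}{k!}\binom{k}{k-j}=\frac{1}{j!(k-j)!}$, which rearranges $\sum_k x^k\sum_{j=0}^k\frac{1}{k!}\binom{k}{k-j}\sigma(j)\tau(k-j)$ into the Cauchy product $\sum_k x^k\sum_{j=0}^k\frac{\sigma(j)}{j!}\cdot\frac{\tau(k-j)}{(k-j)!}$.

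Having established $\phi(X)=x$, (i) and (ii), it follows that $\phi$ is a $\K$-algebra homomorphism from $(\Str,+,\shu,0,1)$ to $(\K[[x]],+,\cdot,0,1)$. Applying $\phi$ componentwise to the \sde s $\sigma_i'=p_{i,\shu}(X,\ov\sigma)$ yields $\frac{\mathrm{d}}{\mathrm{d}x}\phi(\sigma_i)=p_i(x,\phi(\ov\sigma))$ for $i=1,\ldots,n$; the initial conditions are preserved since $\phi(\sigma_i)(0)=\sigma_i(0)=r_{0i}$. Hence $\phi(\ov\sigma)=(\phi(\sigma_1),\ldots,\phi(\sigma_n))$ is a formal power series solution of the \ode\ initial value problem $(\mathcal{S},\rho)$.

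To conclude, invoke uniqueness of formal power series solutions of polynomial \ode\ initial value problems: the Taylor coefficients of the solution are determined recursively by the equations and by $\ov r_0$. This forces $\ov y(x)=\phi(\ov\sigma)$, which is exactly the stated formula $y_i(x)=\sum_{j\geq 0}\frac{\sigma_i(j)}{j!}x^j$. The only delicate step is the verification of (ii); everything else is bookkeeping. A minor pitfall worth flagging is to make sure that the evaluation $p_{i,\shu}(X,\ov\sigma)$ is consistently interpreted in the shuffle algebra (and not by an accidental use of convolution), so that the homomorphism property applies uniformly to each monomial of $p_i$.
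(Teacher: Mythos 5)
Your proof is correct, but it takes a genuinely different route from the paper's. The paper works \emph{coefficient by coefficient}: it introduces the \emph{Lie derivative} $\Lie(\cdot)$ on polynomials associated with the vector field $(p_1,\ldots,p_n)$, establishes that (a) under the shuffle interpretation $p(x,\ov\tau)'=\Lie(p)(x,\ov\tau)$ for any tuple of streams $\ov\tau$, and (b) $\Lie(p)(0,\ov r_0)=\frac{\mathrm{d}}{\mathrm{d}x}p(x,\ov y(x))|_{x=0}$; iterating both facts $k$ times and specializing to $p=y_i$ gives $\sigma_i(k)=\frac{\mathrm{d}^k}{\mathrm{d}x^k}y_i(x)|_{x=0}$, which is the stated Taylor coefficient. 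Your argument instead works \emph{globally}: you package the entire correspondence into the single observation that the exponential generating function $\phi(\sigma)=\sum_{j\geq 0}\frac{\sigma(j)}{j!}x^j$ is a $\K$-algebra homomorphism from $(\Str,+,\shu,0,1)$ to $(\K[[x]],+,\cdot,0,1)$ that sends $X\mapsto x$ and intertwines stream derivative with $\frac{\mathrm{d}}{\mathrm{d}x}$. Applying $\phi$ to the \sde\ system produces the \ode\ system in one step, and uniqueness of formal power series solutions finishes the job. The two proofs rest on the same underlying fact --- that binomial convolution corresponds to Cauchy product of exponential generating functions --- but your presentation is more structural and shorter, avoiding the explicit inductive treatment of iterated Lie/stream derivatives; the paper's presentation makes the classical Lie-derivative machinery visible, which it can then reuse when discussing Taylor coefficient computations along \ode\ trajectories.
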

	
	Unfortunately, we do not have for \sde\ systems $\Eqs$ a translation result between $\otimes$ and $\convo$ analogous to that for polynomial systems $\E$ in Lemma \ref{lemma:shufflevsconv}.
	As far as computation of the coefficients is concerned, though, in  analogy with \eqref{eq:recurrence} in Remark \ref{rem:comp}, the stream $\otimes$-solution $\ov\sigma$ of $(\Eqs,\rho)$ satisfies the recurrence ($i=1,...,n$):
	\begin{align}\nonumber
		\sigma_i(0)&=y_i(0)\\
		\sigma_i(k+1)&= p_{i,\otimes}(x,\ov\sigma_{:k})(k)\,.\label{eq:recurrenceBinomial}
	\end{align}
	This recurrence can be  implemented in essentially the same way as discussed for $\convo$ in Section \ref{sec:computational}: in \eqref{eq:reconv}, last clause, one has just to replace (Cauchy) convolution with the binomial convolution \eqref{eq:binomial}.
	
	A consequence of the previous result is that we can also deal with a wide class of  non-polynomial systems of differential equations and equations, provided the involved functions can be expressed by means of polynomial differential equations, which is very often the case.
	
	\begin{exa}[labelled trees]\label{ex:forests}
		Let $t_k$ denote the number of unordered  rooted  labelled trees on an $k$-element set, and let $T(x)=\sum_{j\geq 0}t_j\frac{x^j}{j!}$ be its  {exponential generating function}. Conventionally, $t_0:=1$. One can show combinatorially, see e.g. \cite{Berkley}, that $T(x)$ obeys the functional equation  $T(x)=xe^{T(x)}$. This is not an algebraic polynomial equation, but can be easily transformed into a \emph{differential} polynomial equation. Indeed, by first  differentiating $T$ and then collecting $\dot T$, one gets $\dot T=(1-xe^T)^{-1}\cdot e^T$. Introducing the variables $W=(1-xe^T)^{-1}$ and $E=e^T$, after   further differentiation and some algebra one arrives at the following polynomial system of \ode s and initial conditions in the variables $(T,E,W)$:
		\begin{align*}
			\begin{cases}
				\dot T  \,= WE  & T(0)=0\\
				\dot E  \, =WE^2  & E(0)=1\\
				\dot W  \, =W^2(E+xWE^2)  & W(0)=1\,.
			\end{cases}
		\end{align*}
		Consider the unique stream $\otimes$-solution $\ov\sigma=(\sigma_T,\sigma_E,\sigma_W)$ of the corresponding \sde\ i.v.p.: by Lemma \ref{lemma:odes}, $T(x)=\sum_{j\geq 0}\sigma_T(j)x^j/j!$. Hence the stream of interest is $(t_0,t_1,t_2,...)=\sigma_T$. Applying the recurrence \eqref{eq:recurrenceBinomial}, we can easily compute any number of elements of this stream: $\sigma_T=(1, 1, 2, 6, 24, 120, 720, ...)$, that is $t_j=j^{j-1}$. This is A000169 in \cite{OEIS}.
	\end{exa}
	
	At this point, one might also considering \emph{mixed} specification of polynomial equations and differential equations, that is \emph{DAE} s, Differential-Algebraic Equations. We leave the exploration of this possibility for future work.
	\fi

	\ifmai
	\begin{rem}[relation with exponential g.f.]\label{rem:egf}
		It is   worthwhile to note that, in the case of a single polynomial, if $\sigma$ is a stream $\otimes$-solution of $p(x,y)$, then the \emph{exponential generating function} $eg(z)$ of $\sigma$, defined by $eg(z):=\sum_{j\geq 0} \sigma(j)\frac{z^j}{j!}$, satisfies $p(z,eg(z))=0$ identically; see \cite[Prop.5.1]{BG21}. In other words $eg(z)$ is an algebraic function, and a branch of $p(x,y)$.  This can be extended to the case of a system of equations. Therefore, in cases where all we know about a stream is an implicitly (polynomially) defined  exponential generating function(s), the above simple result concretely means that we can get the stream elements by computing a stream $\convo$-solution of the equations(s). In the above example, $y-(1+xy^2)$ implicitly defines the exponential generating functions of \emph{labelled} binary trees (see e.g. \cite{flajo}); hence the number of such trees with $i$ nodes is given by $i!c_i$.
	\end{rem}
	\fi
	
	\section{Conclusion}\label{sec:concl}
	In this paper, we have presented an implicit function theorem for the stream calculus, a powerful set of tools for reasoning on infinite sequences of elements from a given field. Our theorem is directly inspired from the analogous one from classical calculus. We have shown that  the stream IFT has clear computational advantages over the classical one.
	
	The present work can be extended in several directions.
	\revOne{
		First, we would like to explore the relations of our work with methods proposed in the realm of numerical analysis for efficient generation  of the Taylor coefficients of \ode' solutions; see e.g.  \cite{FW11} and references therein.
	}
	Second, we would like to go beyond the polynomial format, and allow for systems of equations $\E$ involving, for example, functions that are in turn defined via \sde s.
	Finally, we would like to extend the present results to the case of multivariate streams, that is consider  a \emph{vector} $\ov x=(x_1,...,x_m)$ of independent variables, akin to the more general version of the classical IFT. Both extensions seem to pose nontrivial technical challenges.

	\bibliographystyle{alphaurl}
	\bibliography{IFT2}

	\newpage
	\appendix
	\section{Three-coloured trees example:   details}\label{app:details_trees}
	In order to generate the rational system \eqref{eq:jacobSDE}, rather than explicitly determining the inverse of the Jacobian $\nablah_{\ov y}\E$, it is practically convenient firstly to form the equivalent system \eqref{eq:basic} and then solve for  $\ov{y'}$. To this purpose,  we apply the syntactic stream derivative operator to the polynomial equations of system \eqref{eq:trees}, obtaining:
	\begin{align}
		\label{eq:first_linear}
		\begin{cases}
			y_1'-1- y_2'  y_2- y_3'  y_3-2 y_2' y_3&=0\\
			y_2'- y_3'  y_3- y_1'  y_1-2 y_3'  y_1&=0\\
			y_3'- y_1' y_1- y_2' y_2-2 y_1'  y_2&=0\,.
		\end{cases}
	\end{align}
	Then we  note that \eqref{eq:first_linear} is a linear system in the variables $\ov y'=(y_1',y_2',y_3')^T$  of the form $A\ov {y'}=\ov b$, where $A= (\nablah_{\ov y}\E)(0,\ov r_0,\ov y) =   \left[\begin{smallmatrix}	1 & -y_2-2y_3 &-y_3  \\
		-y_1 & 1 & -y_3-2y_1  \\
		-y_1-2y_2& -y_2 & 1
	\end{smallmatrix}
	\right]$ and $\ov b=-(\nablah_x\E)(0,\ov y)^T=(1,0,0)^T$. Note that this is another way of writing system \eqref{eq:basic}. 
	Now, we solve $A\ov {y'}=\ov b$ for $\ov {y'}$. Denoting the determinant of $\nablah_{\ov y}\E$ as {\small
		$$ \Delta=2  y_1^2 y_2 + 4 y_1^2 y_3 + 4  y_1 y_2^2 + 10 y_1  y_2  y_3 + 3 y_1  y_2 + 2  y_1  y_3^2 + 3  y_1  y_3 + 2  y_2^2  y_3 + 4  y_2  y_3^2 + 3 y_2  y_3 - 1$$
	}\noindent
	and taking into account the initial condition, we arrive at \eqref{eq:jacobSDE}:
	\begin{align}\label{eq:rationalSDE}
		\begin{cases}
			y_1' \, = {\Delta^{-1}}\cdot (2  y_1 y_2 +  y_2  y_3 - 1)   &   y_1(0)=0\\
			y_2' \,=  {\Delta^{-1}}\cdot (-2  y_1^2 - 4  y_1 y_2 -  y_1  y_3 -  y_1 - 2 y_2  y_3 ) \;\; &  y_2(0)=0\\
			y_3' \, = {\Delta^{-1}}\cdot(- y_1  y_2 -  y_1 - 2  y_2)     & y_3(0)=0\,.
		\end{cases} 
	\end{align}
	In order to convert the above rational \sde\ initial value problem to a polynomial one,  we can apply Lemma \ref{lemma:rational}.
	In practice,  we replace $\Delta^{-1}$ with a new variable $w$, then we add the corresponding equation   to system \eqref{eq:rationalSDE}. In order to derive   a \sde\ for the variable $w$, we recall that    the multiplicative inverse $\sigma^{-1}$ of a stream $\sigma$   such that $\sigma(0)\neq 0$  satisfies the \sde\ and initial condition  \eqref{eq:inv}.
	In our case, starting from $w' =w(0) \Delta'w$, and calling $p_1,p_2,p_3$ the right-hand sides of the \sde s in \eqref{eq:rationalSDE}, we get
	{\small
		\begin{align*}
			\label{eq:SDE}
			\begin{cases}
				w'  &=  w(0)\cdot (2y_1'y_1y_2+4y_1'y_1y_3+4y_1'y_2^2+10 y_1'y_2 y_3+3y_1'y_2+2  y_1'y_3^2+3y_1' + y_3\\
				&\;\;\;\; +2y_2'y_2y_3+4y_2'y_3^2+3y_2'y_3)w\\
				&= -(2p_1y_1y_2+4p_1y_1y_3+4p_1y_2^2+10p_1y_2 y_3+3p_1y_2+2  p_1y_3^2+3p_1 + y_3\\
				&\;\;\;\; +2p_2y_2y_3+4p_2y_3^2+3p_2y_3)w \\
				w(0)&= -1
			\end{cases}
		\end{align*}
	}\noindent
	where the initial condition $w(0)=-1$  is implied by   $\Delta(0, \ov r_0)=-1$. Finally, expanding the $p_i$'s and putting everything together, we obtain the following polynomial system of \sde s and initial conditions:
	{\small \begin{align}
			\begin{cases}
				y_1'\ =  2wy_1y_2+wy_2y_3-w &  y_1(0)=   0\\
				y_2'\ = -2wy_1^2-4wy_1y_2-w y_1y_3-wy_1-2wy_2y_3 & y_2(0)=   0\\
				y_3'\ = -wy_1y_2-wy_1 - 2wy_2 & y_2(0)=   0\\
				w'\ = 4w^2y_1^2y_2^2+4w^2y_1^2y_2y_3-8w^2y_1^2y_3^2-6w^2y_1^2y_3 + 8 w^2 y_1y_2^3+ & w(0)= -1\\
				\;\;\;\;\;\;\;\;\;\;14w^2y_1y_2^2  y_3 + 6 w^2y_1y_2^2 - 10 w^2y_1y_2y_3^2-8w^2y_1y_2y_3-2w^2\\
				\;\;\;\;\;\;\;\;\;\;y_1 y_2-4w^2y_1y_3^3-7w^2y_1y_3^2 - 7w^2y_1y_3+4w^2y_2^3y_3+6w^2y_2^2\\
				\;\;\;\;\;\;\;\;\;\;y_3^2+3w^2y_2^2y_3-4w^2y_2^2-6w^2 y_2y_3^3-3w^2y_2y_3^2-10w^2y_2y_3-\\
				\;\;\;\;\;\;\;\;\;\;3w^2y_2- 2w^2y_3^2-3w^2y_3\,.
			\end{cases}
		\end{align}
	}\noindent

	\ifmai
	\vsp
	Let us introduce some notation for the next lemma. The \emph{Hadamard} stream product is defined by $(\sigma\had\tau)(i):=\sigma(i)\cdot\tau(i)$ for each $i\geq 0$. Let $\fact:=(0!,1!,2!,...,i!,...)$ be the stream of factorial numbers\footnote{In a generic field of characteristic 0, one lets $i$ denote $1+1+\cdots+1$ ($i$ times).} and  $\fact^{-1}:=(1/0!,1/1!,1/2!,...,1/i!,...)$ its Hadamard inverse, that is $\fact\had\fact^{-1}=(1,1,1,...)$, which is the identity of the Hadamard product. One can easily check by elementary calculations that  for any two streams $\sigma$ and $\tau$
	\begin{align*}
		\fact^{-1}\had(\sigma\otimes\tau)&=(\fact^{-1}\had \sigma)\convo(\fact^{-1}\had \tau)
	\end{align*}
	(see also \cite{BG21}). Given that $\had$ distributes over sum, the above correspondence extends to polynomials. More precisely, with the   notation $\fact^{-1}\had\ov\sigma:=(\fact^{-1}\had \sigma_1,...,\fact^{-1}\had\sigma_n)$, one has:
	\begin{align*}
		\fact^{-1}\had p_\otimes(x,\ov\sigma)&=p_\convo(x,\fact^{-1}\had\ov\sigma)\,.
	\end{align*}
	
	\vsp
	\begin{proof_of}{Lemma \ref{lemma:shufflevsconv}}
		Let $\ov\sigma$ be a $\otimes$-solution of $\E$. Note that $p_\otimes(x,\ov\sigma)=0$ iff $\fact^{-1}\had p_\otimes(x,\ov\sigma)=0$. But $\fact^{-1}\had p_\otimes(x,\ov\sigma)=  p_\convo(\fact^{-1}\had\ov\sigma)$, hence, letting $\tau=\fact^{-1}\had\ov\sigma$, we have that $\ov\sigma=\fact\had\ov\tau$ for $p_\convo(x,\ov\tau)=0$: this proves one implication. The opposite implication is proven similarly.
	\end{proof_of}
	
	\vsp
	\begin{proof_of}{Lemma \ref{lemma:odes}}
		Based on the right-hand sides  polynomials in $\mathcal{S}$ (and $\Eqs$), $p_1,...,p_n$, one defines syntactically a classical  --- or \emph{Lie} ---   derivative $\Lie(\cdot)$ on polynomials: the main clause is $\Lie(y_i m)=p_i m+y_i\Lie(m)$, the other clauses are obvious. Note that $\Lie(p) \in \reals[x,\ov y]$. We note the following facts that connect Lie, ordinary and stream derivatives, and that hold true for any polynomial $p(x,\ov y)$.
		\begin{itemize}
			\item[(a)]
			$p(x,\ov\tau)'=\Lie(p)(x,\ov\tau)$, for any tuple of streams $\ov\tau$. On the left-hand side,   $(\cdot)'$ denotes, as usual,    {stream} derivative. Both on the left and on the right-hand it is understood of course that the evaluation of  the polynomials is carried out by interpreting product as $\otimes$.  For $\ov\tau=\ov\sigma$, the stream $\otimes$-solution of $(\Eqs,\rho)$, this yields: $p(x,\ov\sigma)'(0)=\Lie(p)(0,\ov r_0)$.
			\item[(b)]  $\Lie(p)(0,\ov r_0)=\frac{\mathrm{d}}{\mathrm{d} x}p(x,\ov y(x))_{|x=0}$, where $\frac{\mathrm{d}}{\mathrm{d} x}$ denotes the ordinary derivative  with respect to $x$.  This is a well-known general  fact   about Lie derivatives.
		\end{itemize}
		Facts (a) and (b) generalize to the $k$th derivatives $(\cdot)^{(k)}$,  for $k\geq 0$,   inductively as expected, with the convention that $e^{(0)}=e$ for any $e$.  As a consequence, for each $k\geq 0$ we have:
		\begin{align*}
			p(x,\ov\sigma)(k)& =p(x,\ov\sigma)^{(k)}(0)  \;  =\Lie^{(k)}(p)(0,\ov r_0)\\
			& =\frac{\mathrm{d}^{(k)}}{\mathrm{d} x^k}\,p(x,\ov y(x))_{|x=0}
		\end{align*}
		where the second equality follows from (a) and the last one from (b). Now, instantiating the previous equality to $p=y_i$, for any $i=1,...,n$, we get: $\sigma_i(k)=\frac{\mathrm{d}^{(k)}}{\mathrm{d} x^k}\,y_i(x)_{|x=0}$. Therefore, $\sigma_i(k)/k!$ is the $k$th coefficient of the Taylor expansion of $y_i(x)$ from $x=0$.
	\end{proof_of}
	\fi

	\end{document}